\DeclareMathAlphabet{\mathpzc}{OT1}{pzc}{m}{it}
\numberwithin{equation}{section}
\newcommand{\ii}{{\rm i}}
\newcommand{\ee}{{\rm e}}
\newcommand{\x}{{\rm x}}
\newcommand{\vol}{{\rm vol}}
\newcommand{\dd}{{\rm d}}
\newcommand{\R}{\mathcal{R}}
\DeclareMathOperator{\supp}{supp}
\newcommand{\1}{1\!\!1}
\newcommand{\beq}{\begin{equation}}
\newcommand{\ene}{\end{equation}}
\newtheorem{thm}{Theorem}
\newtheorem{lemma}[thm]{Lemma}
\newtheorem{prop}[thm]{Proposition}
\newtheorem{rem}[thm]{Remark}
\theoremstyle{definition}
\newtheorem{defn}[thm]{Definition}
\newtheorem{ex}[thm]{Example}
\begin{document}

\title{Semiclassical gravity in static spacetimes \\ as a constrained initial value problem}
\author{Benito A. Ju\'arez-Aubry}
\affil{Departamento de Gravitaci\'on y Teor\'ia de Campos, Instituto de Investigaciones Nucleares, Universidad Nacional Aut\'onoma de M\'exico, A. Postal 20-126, CDMX, M\'exico}
\affil{benito.juarez@correo.nucleares.unam.mx}
\date{\daymonthyear\today}

\maketitle

\begin{abstract}
We study the semiclassical Einstein field equations with a Klein-Gordon field in ultrastatic and static spacetimes. In both cases, the equations for the spacetime metric become constraint equations. In the ultrastatic case, the Hadamard singular structure can be characterised explicitly, which allows one to in principle give initial data for the Wightman function that has correct distributional singularities, such that the expectation value of the renormalised stress-energy tensor of the solution can be defined, and that hence the semiclassical Einstein equations make sense. Assuming a ``positive energy" condition for the Klein-Gordon operator, we characterise the states for which, if the constraints hold for initial data, they hold everywhere in spacetime. These turn out to be time-translation invariant states. The static case is analysed by conformal techniques, effectively reducing the problem to an ultrastatic one.
\end{abstract}


\section{Introduction}
\label{sec:Intro}

Semiclassical gravity describes the coupled dynamics of quantum matter with the classical spacetime metric, where the latter is sourced by the expectation value of the renormalised stress-energy tensor of the quantum fields. Thus, the theory describes both how spacetime curves by the effects of the energy, pressure and stress of quantum fields in (generally curved) spacetime, and the effects of spacetime curvature on the quantum fields dynamics. 

It is likely that semiclassical gravity is at least relevant as a semiclassical regime of quantum gravity, sufficiently far away from Planck scale. The absence of a fully-workable and well-defined theory of quantum gravity, despite substantial progress in constructing it, adds importance to the understanding of the semiclassical regime. (A number of references to the different approaches to quantum gravity appear in sec. 1 of \cite{DeWitt:2007mi}, a reprint of DeWitt's 1978 Carg\`ese lectures \cite{DeWitt:1978gi}.) In any case, certain situations in theoretical physics in which quantum effects are important and the spacetime curvature cannot be neglected are described in the semiclassical gravity setting, including black hole physics and cosmology. The review \cite{Ford} discusses these and other important situations, as well as other relevant applications and effects in semiclassical gravity, and provides further valuable references in this direction. 

We should also stress that some of the most important open problems in physics are semiclassical gravity problems. Worth of emphasis is the information loss puzzle \cite{Hawking:1976ra, Hawking:2005kf}, which suggests that it could be the case that a low entropy state for the quantum matter would increase its entropy after the evaporation of a black hole has been completed. It would seem that an uncontroversial resolution of this puzzle involves understanding the coupled dynamics of an evaporating black hole, which induces Hawking radiation from the quantum fields in spacetime, while the stress-energy tensor of the quantum field drives the black hole evaporation due to a negative ingoing flux of stress-energy at the black hole horizon. We point the reader to some very recent and interesting perspectives on the information loss puzzle \cite{Arrechea:2020jtv, Kay:2019ukr, Maudlin:2017lye, Modak:2014vya, Unruh:2017uaw, Wallace:2017wzs}.

Compared to general relativity with classical matter, the field equations of semiclassical gravity take a more complicated form even if only a free quantum field is present in the matter sector, see e.g. eq. \eqref{semiEFEKG1} below for a Klein-Gordon field playing the r\^ole of matter. The reason for this is that the right-hand side of the Einstein equation (eq. \eqref{semiEFE1} below) is now defined in terms of a renormalised quantity, since the stress-energy tensor is quadratic in the field. The renormalisation of the stress-energy tensor introduces (i) higher-order terms for the metric tensor that cast the problem as a fourth order problem for the metric, (ii) renormalisation ambiguities, which should be presumably fixed by experiments or quantum gravity, and (iii) requires a precise control of the distributional singularities of the correlation functions of the matter fields in spacetime in the state where one wishes to define the expectation value of the stress-energy tensor. That the singular structure of the state be correct (of Hadamard form for free fields), and hence that the renormalised stress-energy tensor exist, poses already a problem for prescribing ``good" initial data for the matter fields. Indeed, how to guarantee that the Cauchy \emph{initial data} for the matter fields will yield a state, as defined by \emph{spacetime correlations}, for which the expectation value of the stress-energy tensor exists is an important open question in semiclassical gravity. The upcoming work \cite{Programmatic} shows that it is in principle possible to prescribe initial data for a Klein-Gordon field for which the resulting state will satisfy a weak version of the Hadamard condition that allow one to define the renormalised stress-energy tensor.

Despite the obstacles above mentioned, substantial efforts and progress have been made in understanding semiclassical gravity. In the works \cite{Parker:1993dk} and \cite{Flanagan:1996gw} it has been studied how to perturbatively reduce the fourth order problem to a second order problem with no runaway solutions, based on a \emph{perturbative agreement principle} introduced by Simon in \cite{Simon:1990ic}.

More recently, there have been works that study semiclassical cosmology quite thoroughly, including the existence and uniqueness of solutions, and make use the modern advancements in the understanding of quantum field theory in curved spacetimes. The treatment of cosmology is more amenable than that of full semiclassical gravity because all of the spacetime geometry is encoded in the scale factor of the metric tensor. We refer the reader to \cite{Dappiaggi:2008mm, Eltzner:2010nx, Gottschalk:2018kqt, Meda:2020smb, Pinamonti:2010is, Pinamonti:2013wya} in this direction. The existence of solutions of semiclassical gravity with conformally covariant fields in globally hyperbolic spacetimes has been studied very recently in \cite{Juarez-Aubry:2021abq}, where the well-posedness in conformally static spacetimes is also studied.

We should add that the study of semiclassical gravity as an initial value problem has been the motivation of our previous work \cite{Juarez-Aubry:2019jon}, where a semiclassical toy model of two scalar fields is solved perturbatively in detail, highlighting some difficulties of the full semiclassical gravity theory, but such that also lessons can be drawn from such simplified system. Semiclassical gravity as an initial value problem is also of relevance for the generally covariant generalisation of a programme in the foundations of physics, which seeks to address the measurement problem by introducing \emph{objective state collapse theories}. In the context of semiclassical gravity we refer to \cite{Modak:2014vya, Canate:2018wtx, DiezTejedor:2011ci, Juarez-Aubry:2017ery, Maudlin:2019bje, Okon:2016qlh, Tilloy:2015zya} for details on the motivation, advances and details.  In the case of our previoius work \cite{Juarez-Aubry:2017ery} an extended introduction can be found in the arXiv v1 preprint. The effects of the state collapse on spacetime in the semiclassical regime are studied in the upcoming \cite{Programmatic}.

The purpose of this paper is to study semiclassical gravity in an amenable situation, namely in general \emph{static spacetimes}, as an initial value problem. The amenability comes from the fact that, in this context, the time-translation symmetry of spacetime renders the dynamical field equations for the metric tensor into constraint equations for the initial data of the quantum field state, which should then be preserved along the state's evolution. Particular cases of semiclassical gravity in static situations have been studied in \cite{Juarez-Aubry:2019jbw} and \cite{Sanders:2020osl}. \cite{Juarez-Aubry:2019jbw} is concerned with the study of quantum field theory in de Sitter spacetime and has connection with the cosmological constant problem, while \cite{Sanders:2020osl} studies a static spacetime where the spatial section is the 3-sphere, motivated by positive-curvature FLRW spacetimes in cosmology. In the context of this paper, the problem studied in \cite{Sanders:2020osl} is the case in which the spacetime is ultrastatic with the spatial section of spacetime corresponding to the 3-sphere.

There are at least two important motivations for studying semiclassical gravity in the static situation: 

First, it is in static (or more generally in stationary) spacetimes where one can define distinguished states for the matter fields. In the case of the ground state, this owes to the existence of a distiguished notion of time -- time elapses as the integral curve of the global timelike, irrotational Killing vector field --, which allows for a distiguished positive- and negative- frequency split for the quantum field representation in terms of creation and annihilation operators in Fock space. The concrete creation and annihilation operators allow one to define a distinguished Hamiltonian for the theory and a vacuum state. (In the Bosonic case this is equivalent to the existence of a distinguished complex structure for the complexified space of classical solutions of the theory.) Positive-temperature equilibrium states are obtained by the mixing of positive and negative frequency modes in such a way that the KMS condition is satisfied. 

Second, as we have mentioned above, characterising the admissible initial data for the quantum state in full semiclassical gravity is a challenging problem. In static spacetimes, as we shall see, this can be done explicitly, in such a way that the resulting state for the solution of the matter equation will have the correct singular structure for the renormalised stress-energy tensor  to make sense everywhere in spacetime (and hence for the semiclassical Einstein field equations to actually make sense too). Therefore, a strong motivation for this work is that understanding the static situation opens the possibility of prescribing good initial data for spacetimes that have a static region (which can be later ignored to obtain data for general spacetimes).

The strategy that we shall follow to analyse semiclassical gravity in static spacetime is to first analyse the ultrastatic case, and then use conformal techniques to handle the static case, by viewing static spacetimes as conformally related to ultrastatic ones with a time-independent conformal factor and showing that the singular structure for states in the static spacetimes is Hadamard if they are conformally related to Hadamard states in the ultrastatic spacetime (under mild assumptions for the conformal factor).

After a brief review of semiclassical gravity in sec. \ref{sec:Review}, which will be useful in part to set our notation, sec. \ref{sec:Ultrastat} and \ref{sec:States} are dedicated to the ultrastatic analysis for semiclassical gravity with a Klein-Gordon field. Sec. \ref{sec:Ultrastat} is devoted to the analysis of the constraints on initial data imposed by the ultrastatic semiclassical Einstein field equations. This involves in particular an analysis of the Hadamard singular structure of physical states, which can be obtained in terms of the spacetime $3$-metric on the initial value Cauchy surface due to the spacetime symmetry. This allows one to control the state singularities completely and guarantee the existence of the expectation value of the renormalised stress-energy tensor, so that the field equations make sense. Sec. \ref{sec:States} then classifies the states for which, if the initial value constraints hold, then they are preserved throughout spacetime. Under a ``positive energy" assumption, these states will be those which are time-translation invariant, which include equilibrium states, such as the vacuum and KMS states among others (provided they satisfy the initial constraints). We show that there are no solutions for coherent states in static semiclassical gravity.

The static case is then handled in sec. \ref{sec:Stat} with the aid of conformal techniques, since static spacetimes are conformally related to ultrastatic ones. The states obtained by such conformal techniques are shown to be Hadamard and the Hadamard singular structure can be obtained by conformal techniques, as well as the constraints and constraint-preserving states. Some examples in static spacetimes and a no-go result are presented in sec. \ref{sec:Examples}. In particular, the no-go result states that the only Ricci-flat, static spacetime that admits semiclassical gravity solutions is Minkowski spacetime. The examples given are for semiclassical gravity in ultrastatic spacetimes with hyperbolic spatial section, in de Sitter spacetime and in a locally flat spacetimes with toroidal spatial section. A summary, final remarks and perspectives of this work are presented in sec. \ref{sec:Conc}.

We follow the convention that a spacetime is a connected, paracompact, differentiable (i.e., smooth) manifold equipped with a Lorentzian metric of signature $(-+++)$, which is in addition time orientable. We also use units in which $c = \hbar = 1$, and use abstract index notation whenever convenient.

\section{A review on semiclassical gravity}
\label{sec:Review}

The equations of semiclassical gravity with a Klein-Gordon matter field with mass $m^2 \geq 0$ and curvature coupling $\xi \in \mathbb{R}$ take the form
\begin{subequations}
\label{semiEFEKG1}
\begin{align}
& G_{ab} + \Lambda^{\rm b} g_{ab} = 8 \pi G_{\rm N}^{\rm b} \omega (T_{ab}) \label{semiEFE1}, \\
& (\Box - m^2 - \xi R) \Phi = 0, \label{KG1}
\end{align}
\end{subequations}
where $\omega$ is a Hadamard state, i.e. the integral kernel of its Wightman two-point function takes the following local form in a convex normal neighbourhood,
\begin{align}
G^+(\x, \x') := \omega(\Phi(\x) \Phi(\x')) = \frac{1}{2 (2 \pi)^2} \left( \frac{\Delta^{1/2}(\x, \x')}{\sigma_\epsilon(\x, \x')} +  v(\x, \x') \ln \left(\frac{\sigma_\epsilon(\x, \x')}{\ell^2} \right) + w_{{\ell} }(\x, \x') \right),
\label{HadamardCondition}
\end{align}
where $\Delta$ is the van Vleck-Morette determinant, $\sigma_\epsilon(\x, \x') := \sigma(\x, \x') + \ii \epsilon (t(\x)-t(\x') ) + \frac{1}{2}\epsilon^2$ is the regularised half-squared geodesic distance with $t$ an arbitrary time function, $v$ and $w$ are smooth and symmetric coefficients and $\ell \in \mathbb{R}$ is an arbitrary length scale. $v$ is completely fixed (at least formally) by expressing it as a covariant Taylor series
\begin{align}
v(\x, \x') = \sum_{n = 0}^\infty v_n(\x, \x') \sigma^n(\x, \x'), \label{vHadamard}
\end{align}
with the $v_n$ coefficients obtained from the so-called Hadamard recursion relations. $v$ turns out to depend only on the spacetime geometry and the parameters $m^2$ and $\xi$. The coefficient $w_\ell$ encodes the details of the quantum state, i.e., for fixed $\ell$, two different Hadamard states will differ only in the coefficient $w_\ell$.

How to obtain the stress-energy tensor on the right-hand side of eq. \eqref{semiEFE1} by a point-splitting regularisation and Hadamard renormalisation method is well known. One must subtract from the two-point function \eqref{HadamardCondition} the Hadamard bi-distribution, $H_\ell$, which in a convex normal neighbourhood takes the form
\begin{align}
H_\ell(\x, \x') = \frac{1}{2 (2 \pi)^2} \left( \frac{\Delta^{1/2}(\x, \x')}{\sigma_\epsilon(\x, \x')} +  v(\x, \x') \ln \left(\frac{\sigma_\epsilon(\x, \x')}{\ell^2} \right) + w^0_{{\ell} }(\x, \x') \right),
\label{Hadamard}
\end{align}
with $w_\ell^0$ arbitrary but fixed (at least formally) by expressing it as a covariant Taylor series analogous to \ref{vHadamard},
\begin{align}
w_\ell^0(\x, \x') = \sum_{n = 0}^\infty w_{\ell \, n}^0(\x, \x') \sigma^n(\x, \x'), \label{wHadamard}
\end{align}
with the coefficients $w_{\ell \, n}^0$ determined by the Hadamard recursion relations up to the arbitrary choice of the ``boundary" term of the series $w_{\ell \, 0}^0$. For the purposes of renormalisation a common choice is to set $w_\ell^0(\x, \x') = 0$, 
and we will make this choice here, 
so that the regular part of $H_\ell$ in eq. \eqref{Hadamard} does not contribute to the stress-energy renormalisation.

For Hadamard states, the difference $G^+ - H_\ell$ is free from distributional singularities as $\x' \to \x$. In order to define the expectation value of the stress-energy tensor, one applies to such regular difference the point-splitted operator \cite{Moretti:2001qh}
\begin{subequations}
\begin{align}
\mathscr{T}_{ab} &  := \mathcal{T}_{ab} - \frac{1}{3}g_{ab} \left( g^{a'b'} \nabla_{a'} \nabla_{b'} - m^2 - \xi  R \right), \label{TabPointSplit1} \\
\mathcal{T}_{ab} &  := (1-2\xi ) g_{b}\,^{b'}\nabla_a \nabla_{b'} +\left(2\xi - \frac{1}{2}\right) g_{ab}g^{cd'} \nabla_c \nabla_{d'}  - \frac{1}{2} g_{ab} m^2 + 2\xi \Big[  - g_{a}\,^{a'} g_{b}\,^{b'} \nabla_{a'} \nabla_{b'} + g_{ab} g^{c d}\nabla_c \nabla_d + \frac{1}{2}G_{ab} \Big],
\label{TabPointSplit2}
\end{align}
\end{subequations}
where $g_a{}^{a'}$ is the parallel-transport propagator, and one proceeds to take the limit $\x' \to \x$ to the resulting expression. Since $H_\ell$ is not a solution to the Klein-Gordon equation in its second argument, the second term on the right-hand side of eq. \eqref{TabPointSplit1} contributes to the expectation value of the renormalised stress-energy tensor as 
\begin{align}
& - \frac{1}{3}g_{ab} \lim_{\x' \to \x}    \left( \Box_{\x'} - m^2 - \xi  R  \right)  \left[G^+( \x, \x' )- H_\ell(\x,\x') \right] = \frac{1}{ (2 \pi)^2} g_{ab} [v_1](\x) \nonumber \\
& =  \frac{1}{(2 \pi)^2} g_{ab} \left( \frac{1}{8} m^4 + \frac{1}{4} \left(\xi - \frac{1}{6}\right) m^2 R - \frac{1}{24}\left(\xi - \frac{1}{5} \right) \Box R   + \frac{1}{8} \left(\xi - \frac{1}{6} \right)^2 R^2 - \frac{1}{720} R_{ab} R^{ab} + \frac{1}{720} R_{abcd} R^{abcd} \right),
\label{TraceAnom}
\end{align}
where $[v_1](\x) := \lim_{\x' \to \x} v_1(\x, \x')$ is the diagonal of the coefficient $v_1$ in the Hadamard expansion of $v$, cf. eq. \eqref{vHadamard}. For a conformally coupled scalar the term \eqref{TraceAnom} gives rise to the trace anomaly of the quantum stress-energy tensor.

The above-described prescription contains some ambiguities. Indeed, already the definition of eq. \eqref{Hadamard} contains an arbitrarily length scale $\ell$ and an arbitrary fixed coefficient $w^0_\ell$ in \eqref{wHadamard}. Such ambiguities can be accounted for by adding to the renormalisation prescription an ambiguity term of the form
\begin{subequations}
\label{Theta}
\begin{align}
\Xi_{ab} & :=  \alpha_1 g_{ab} + \alpha_2 G_{ab} + \alpha_3 I_{ab} + \alpha_4 J_{ab}, \\
I_{ab} & :=  - \Box R_{ab} + R_{;ab} - 2 R^{cd} R_{cadb} - \frac{1}{2} g_{ab} \Box R + \frac{1}{2} g_{ab} R_{cd} R^{cd} , \\
J_{ab} & := 2 R_{;ab} - 2 g_{ab} \Box R + \frac{1}{2}g_{ab} R^2 - 2 R R_{ab},
\end{align}
\end{subequations}
with arbitrary $\alpha_i \in \mathbb{R}$ ($i = 1, \ldots 4$), to the final expression.



In summary, the expectation value of the stress-energy tensor appearing on the right-hand side of eq. \eqref{semiEFE1} is
\begin{align}
\omega(T_{ab}(\x)) & = \lim_{\x' \to \x} \mathcal{T}_{ab}(G^+(\x, \x') - H_\ell(\x, \x')) + \frac{1}{8 \pi^2} g_{ab} [v_1](\x) + \Xi_{ab}(\x) \nonumber \\
& = \lim_{\x' \to \x} \mathcal{T}_{ab}w_\ell(\x, \x') + \frac{1}{(2 \pi)^2} g_{ab} [v_1](\x) + \Xi_{ab}(\x),
\label{TabSummary}
\end{align}
where the second line holds for our choice $w_{\ell}^0 = 0$ in the definition of $H_\ell$.

Note that the coefficients $\alpha_1$ and $\alpha_2$ can be seen as renormalising the ``bare" cosmological and Newton's constant, $\Lambda^{\rm b}$ and $G_{\rm N}^{\rm b}$ respectively. We can eliminate them from \eqref{semiEFE1} by defining 
\begin{subequations}
\begin{align}
\Lambda & :=  \frac{\Lambda^{\rm b} - 8 \pi G_{\rm N}^{\rm b} \alpha_1}{1 -  8 \pi G_{\rm N}^{\rm b} \alpha_2} = \Lambda^{\rm b} + O(\ell_{\rm P}^2), \quad 
& G_{\rm N}  & :=  \frac{G_{\rm N}^{\rm b}}{1 -  8 \pi G_{\rm N}^{\rm b} \alpha_2} = \ell_{\rm P}^2 + O(\ell_{\rm P}^4), \label{RenConstants} \\
\alpha & :=  \frac{8 \pi G_{\rm N}^{\rm b} \alpha_3}{1 -  8 \pi G_{\rm N}^{\rm b} \alpha_2} = O(\ell_{\rm P}^2), \quad
& \beta & :=  \frac{8 \pi G_{\rm N}^{\rm b} \alpha_4}{1 -  8 \pi G_{\rm N}^{\rm b} \alpha_2} = O(\ell_{\rm P}^2), \label{NewConstants}
\end{align}
\end{subequations}
where in each case the equalities on the right-hand side are expressed in natural units ($\hbar = 1, c = 1$).\footnote{One sees from eq. \eqref{RenConstants} that the ``renormalised" cosomological and Newton's constant are small corrections from their ``bare" values and from \eqref{NewConstants} that the new constants $\alpha$ and $\beta$ are largely suppressed and only become relevant for curvatures of order $O(1/\ell_{\rm P}^2)$, i.e., presumably in the quantum gravity regime.}

For semiclassical problems such as \eqref{semiEFEKG1}, it is most convenient to trade the matter field equation \eqref{KG1} by equations on the Wightman function, as argued in \cite{Juarez-Aubry:2019jon}. Upon this trade, we demand that the Wightman function $G^+(\x, \x')$ satisfy the Klein-Gordon equation in each of its arguments. Thus, the system of equations \eqref{semiEFEKG1} can be written as
\begin{subequations}
\label{semiEFEKG2}
\begin{align}
& G_{ab} + \Lambda g_{ab} = 8 \pi G_{\rm N} \lim_{\x' \to \x} \mathcal{T}_{ab} \left(G^+(\x, \x') - H_\ell (\x, \x') \right) + \frac{ 2 G_{\rm N}}{\pi} g_{ab} [v_1] + \alpha I_{ab} + \beta J_{ab}, \label{semiEFE2}\\
& (\Box - m^2 - \xi R(\x)) G^+(\x, \x') = 0 = (\Box' - m^2 - \xi R(\x')) G^+(\x, \x'). \label{KG2}
\end{align}
\end{subequations}

For states with non-vanishing one-point function, one should require that the Klein-Gordon equation be satisfied for the one-point function too. For the most part, in our analysis we will assume the one-point function vanishes, but we relax this assumption in sec. \ref{subsec:Coherent}. Eq. \eqref{semiEFE2} can be also written as
\begin{align}
G_{ab} + \Lambda g_{ab} = 8 \pi G_{\rm N} \lim_{\x' \to \x} \mathcal{T}_{ab} w_\ell(\x, \x') + \frac{2 G_{\rm N}}{\pi} g_{ab} [v_1] + \alpha I_{ab} + \beta J_{ab}.
\label{semiEFE2-alt}
\end{align}

As a Cauchy problem, the system \eqref{semiEFE2} should provide solutions $g_{ab}$ and $G^+$ in a globally hyperbolic spacetime, given initial data, if the problem is well posed. Since the system \eqref{semiEFE2} is second order for the matter and fourth order for the metric, the initial data provided in a would-be Cauchy surface, $S$, with normal $n^a$, should consist of data $h_{ab}, K_{ab}, \nabla_n K_{ab}$ and $\nabla_n \nabla_n K_{ab}$ for $g_{ab}$ and data 
\begin{subequations}
\label{WightmanIVP}
\begin{align}
G^+(\x,\x')|_S = \omega(\varphi(\underline{\x}) \varphi(\underline{\x}')) = G^+_{\varphi\varphi}(\underline{\x}, \underline{\x}'), & \quad & \nabla_n G^+(\x,\x')|_S = \omega(\pi(\underline{\x}) \varphi(\underline{\x}')) = G^+_{\pi \varphi}(\underline{\x}, \underline{\x}'), \\
\nabla_{n'}G^+(\x,\x')|_S = \omega(\varphi(\underline{\x}) \pi(\underline{\x}')) = G^+_{\varphi \pi}(\underline{\x}, \underline{\x}'), & \quad & \nabla_n \nabla_{n'} G^+(\x,\x')|_S = \omega(\pi(\underline{\x}) \pi(\underline{\x}')) = G^+_{\pi \pi}(\underline{\x}, \underline{\x}'),
\end{align}
\end{subequations}
for $G^+(\x, \x')$, where $\varphi$ and $\pi$ are the 3-field and its momentum on $S$ (and $\underline{\x}, \underline{\x}' \in S$). Naturally, the canonical commutation relations must hold for the state on $S$, so that the following constraint holds distributionally
\begin{align}
\omega( \varphi(\underline{\x}) \pi(\underline{\x}') ) -\omega( \pi(\underline{\x}) \varphi(\underline{\x}') ) = \ii \delta_h(\underline{\x}, \underline{\x}').
\end{align}

A number of difficulties in showing that \eqref{semiEFEKG2} is well posed are well-known. The aim of this paper is to show that unique solutions exist to the system \eqref{semiEFEKG2} in the (ultra)static case for suitable initial data.


\section{Semiclassical gravity in ulstrastatic spacetimes}
\label{sec:Ultrastat}

In this section, we study the system \eqref{semiEFEKG2} as an initial value problem in ultrastatic situations. Assume that the spacetime metric takes the local form
\begin{align}
g_{ab} = - \dd t_a \otimes \dd t_b + h_{ij} \dd x^i_a \otimes \dd x^j_b, 
\label{UltraStat-metric}
\end{align}
in local coordinates $p^\mu = (t, x^i)$ and that $\eta^a = \partial_t^a$ is a Killing vector field. In this case, the Riemann tensor of the problem satisfies locally  that
\begin{align}
R_{\mu \nu \rho}{}^{\sigma} =\left\{
                \begin{array}{l}
 0, \text{ if any coordinate index is equal to } t, \\
 \mathcal{R}_{i j k}{}^{l}, \text{ with }  i,j,k,l = 1, \ldots 3 \text{ otherwise}. 
\end{array}\right.
\label{UltraStat-Riemann}
\end{align}
where $\mathcal{R}_{abcd}$ denotes the 3-Riemann tensor of the 3-metric $h_{ab}$. We shall use similar caligraphic notation for the Ricci tensor and scalar, $\mathcal{R}_{ac} = \mathcal{R}_{abc}{}^b$ and $\mathcal{R} = h^{ab} \mathcal{R}_{ab}$ respectively. That the Riemann tensor takes the form of eq. \eqref{UltraStat-Riemann} can be seen directly from the local definition of the Riemann tensor in terms of connection coefficients. 

Set the initial value surface $S$ at $t = 0$ defined by Killing time. We assume that the spacetime is globally hyperbolic, thus $S$ is geodesically complete \cite[p. 130]{Fulling:1989nb} (and hence complete in the metric-space sense). $\eta^a$ is surface-orthogonal, and the extrinsic curvature of $S$, $K_{ab}$, and any number of its normal derivatives vanish identically. Moreover, the half-squared geodesic distance takes the form $\sigma = - \frac{1}{2}(t-t')^2 + \sigma_h$, where $\sigma_h$ is the half-squared geodesic distance in the spacetime $(S, h)$.

In the coordinates $p^\mu = (t, x^i)$, the semiclassical gravity equations \eqref{semiEFEKG2} take the simplified form
\begin{subequations}
\label{semiEFEKG3}
\begin{align}
&  \frac{1}{2} \R - \Lambda = 8 \pi G_{\rm N} \lim_{p' \to p} \mathcal{T}_{tt} \left(G^+(p, p') - H_\ell (p, p') \right) - \frac{2 G_{\rm N}}{\pi}  [v_1] + \frac{\alpha}{2} \left(\Delta_h \R - \R^{ij}\R_{ij}\right) + \beta \left(2 \Delta_h \R - \frac{1}{2} \R^2 \right), \label{semiEFE3-1}\\
& \lim_{p' \to p} \mathcal{T}_{t i} \left(G^+(p, p') - H_\ell (p, p') \right) = 0 \label{semiEFE3-2}\\
& \R_{ij} - \frac{1}{2} \R h_{ij} + \Lambda h_{ij} = 8 \pi G_{\rm N} \lim_{p' \to p} \mathcal{T}_{ij} \left(G^+(p, p') - H_\ell (p, p') \right) + \frac{2 G_{\rm N}}{\pi} h_{ij} [v_1] + \alpha I_{ij} + \beta J_{ij}, \label{semiEFE3-3}\\
& (\partial_t^2 + A_h) G^+ = (\partial_{t'}^2 + A_{h'}) G^+ = 0, \quad \text{with} \quad A_h := -\Delta_h + m^2 + \xi \R,  \label{KG3}
\end{align}
\end{subequations}
where $\Delta_h $ is the Laplacian operator associated with the 3-metric $h_{ab}$ and
\begin{subequations}
\begin{align}
 [v_1] & =   \frac{1}{8} m^4 + \frac{1}{4} \left(\xi - \frac{1}{6}\right) m^2 \R - \frac{1}{24}\left(\xi - \frac{1}{5} \right) \Delta_h \R   + \frac{1}{8} \left(\xi - \frac{1}{6} \right)^2 \R^2 - \frac{1}{720} \R_{ij} \R^{ij} + \frac{1}{720} \R_{ijkl} \R^{ijkl} , \label{UltraStat-v1} \\
I_{ij} & =  - \Delta_h \R_{ij} + \R_{;ij} - 2 \R^{kl} \R_{kilj} - \frac{1}{2} h_{ij} \Delta_h \R + \frac{1}{2} h_{ij} \R_{kl} \R^{kl} , \label{UltraStat-Iij}\\
J_{ij} & = 2 \R_{;ij} - 2 h_{ij} \Delta_h \R + \frac{1}{2}h_{ij} \R^2 -  2 \R \R_{ij}. \label{UltraStat-Jij}
\end{align}
\end{subequations}

If $H_\ell(p,p')$ is known, then eq. \eqref{semiEFE3-1}, \eqref{semiEFE3-2} and \eqref{semiEFE3-3} yield initial-data constraints when evaluated at $t = 0$ for the Wightman two-point function data that can be handled explicitly. Such constraints should be preserved at all times. We shall now see in sec. \eqref{subsec:Had-Sing} that indeed $H_\ell(p,p')$ can be obtained in closed form (up to any desired order in $\sigma$) in the ultrastatic case. We shall furthermore analyse these constraints in more detail below in sec. \eqref{subsec:Constraints}.

\subsection{The Hadamard singular structure in ultrastatic spacetimes}
\label{subsec:Had-Sing}

We now show that in ultrastatic spacetimes the Hadamard singular structure is known to any desired order in $\sigma$ for arbitrary 3-metric $h_{ab}$ (see eq. \eqref{UltraStat-metric}). The requirement that the Klein-Gordon state be Hadamard, cf. eq. \eqref{HadamardCondition}, imposes that in a convex normal neighbourhood
\begin{subequations}
\label{HadamardFormUltrastatic}
\begin{align}
G^+ & = \frac{1}{2 (2 \pi)^2} \left( \frac{\Delta^{1/2}}{\sigma_\epsilon} +  v \ln \left(\frac{\sigma_\epsilon}{\ell^2} \right) + w_{{\ell} }\right), \text{ with }\\
\sigma(p,p') & = -\frac{1}{2}(t-t')^2 + \sigma_h(x,x'), \label{UltraStat-sigma} \\
\Delta(p,p') & = -[-\det g(p)]^{-1/2} \det[\sigma_{;a b'}(p,p') ] [-\det g(p')]^{-1/2} \nonumber \\
& = -[\det h(x)]^{-1/2} \det[\sigma_{h;ij'}(p,p') ] [\det h(x')]^{-1/2} = - \Delta_{(3)}(x,x'), \label{UltraStat-VanVleck}\\
v(p,p') & = v_0(x,x') + v_1(x,x') \sigma(p,p') + O(\sigma^{3}), \label{UltraStat-vcoeff}
\end{align}
\end{subequations}
where $\Delta_{(3)}$ in eq. \eqref{UltraStat-VanVleck} is the Van Vleck-Morette determinant of the 3-metric $h_{ab}$. The minus sign on the right-hand side of eq. \eqref{UltraStat-VanVleck} accounts for the fact that $h_{ab}$ has Euclidean signature. The coefficients $v_n$ obeys the Hadamard recursion relations, see e.g. \cite[eq. (38)]{Decanini:2005eg}. On the right-hand side of eq. \eqref{UltraStat-vcoeff}, we have written that the $v_n$ coefficients are time-independent. We now explain why this is the case.

It is a direct consequence of the geometric nature of the $v_n$ that they can only depend on time as the difference $t-t'$ (by stationarity). 
Further, each of the coefficients $v_n$ for $v$ (cf. eq. \eqref{vHadamard} and \eqref{UltraStat-vcoeff}) can be expanded in a covariant Taylor series up to any desired order, say $N$, of the form \cite{Decanini:2005eg}
\begin{align}
v_n(p,p') = v_{n0}(p) + \sum_{m = 1}^N v_{n m \, \mu_1 \cdots \mu_m}(p) \sigma^{;\mu_1}(p,p') \cdots \sigma^{;\mu_m}(p,p') + O\left(\sigma^{(N+1)/2}\right).
\label{vnExpansions}
\end{align}

In the ultrastatic case, the coefficient $v_{n0}$ is $t$-independent, since in the chosen coordinates all geometric terms are so. The coefficients $v_{n m \, \mu_1 \cdots \mu_m}$ must be $t$-independent for the same reason. Further, since any tensorial index of the coefficients $v_{n m \, \mu_1 \cdots \mu_m}$ corresponds to the index of a covariant derivative or a curvature tensor index, they are purely spatial. This is so because (a) of time-independence, (b) any connection coefficient containing any $t$-index vanishes and (c) as we have seen in eq. \eqref{UltraStat-Riemann}, the vanishing is also true for the Riemann tensor components containing $t$-indices (as well as its contractions). Thus, eq. \eqref{vnExpansions} can be written as a covariant Taylor series in the spatial geodesic distance, $\sigma_h$, as follows
\begin{align}
v_n(p,p') & = v_{n0}(x) + \sum_{m = 1}^N v_{n m \, i_1 \cdots i_m}(x) \sigma^{;i_1} \cdots \sigma^{;i_m} + O\left(\sigma^{(N+1)/2}\right) \nonumber \\
& = v_{n0}(x) + \sum_{m = 1}^N v_{n m \, i_1 \cdots i_m}(x) \sigma_{h}^{;i_1} \cdots \sigma_h{}^{;i_m} + O\left(\sigma_h^{(N+1)/2}\right) = v_n(x,x'),
\label{vnExpansions2}
\end{align}
where in the second line we used that $\sigma^{;\mu} = (-(t-t'), \sigma_h(x,x')^{;i})$, cf. eq. \eqref{UltraStat-sigma}, so that $\sigma^{;i} = \sigma_h(x,x')^{;i}$.

To illustrate the form that these coefficients take, it suffices for our purposes to quote eq. (108) and (109) of \cite{Decanini:2005eg} adapted to the ultrastatic case
\begin{subequations}
\begin{align}
v_0 &= \frac{1}{2} m^2 + \frac{1}{2}\left( \xi - \frac{1}{6} \right) \R - \frac{1}{4}\left(\xi - \frac{1}{6} \right) \R_i \sigma_h{}^{;i} + \frac{1}{12} \left[ \frac{1}{2} m^2 \R_{ij} + \left(\xi - \frac{3}{29} \right) \R_{;ij} - \frac{1}{20} \Delta_h \R_{ij} \right. \nonumber \\
& \left. + \frac{1}{2} \left( \xi - \frac{1}{6} \right) \R \R_{ij} + \frac{1}{15} \R^c{}_i \R_{cj} - \frac{1}{30} \R^{kl} \R_{kilj} - \frac{1}{30} \R^{klm}{}_i \R_{klmj} \right] \sigma_h{}^{;i} \sigma_h{}^{;j} +O\left( \sigma_h^{3/2} \right)\nonumber \\
v_1 &= [v_1] + O\left(\sigma_h^{1/2}\right),
\end{align}
\end{subequations}
with $[v_1]$ given by eq. \eqref{UltraStat-v1}. 

Hence, as claimed, in ultrastatic spacetimes the Hadamard singular structure can be characterised explicitly in terms of 3-geometric data up to any desired order in the geodesic distance.

\subsection{Constraints on the state initial data}
\label{subsec:Constraints}

Since in the ultrastatic case the metric evolves trivially, eq. \eqref{semiEFE3-1}, \eqref{semiEFE3-2} and \eqref{semiEFE3-3} are constraints for the state of the matter field. The conditions that
\begin{subequations}
\label{Tindep}
\begin{align}
& \frac{\partial}{\partial t} \lim_{p' \to p} \mathcal{T}_{tt} \left(G^+(p, p') - H_\ell (p, p') \right) = 0, \label{Tindep-1}\\
& \frac{\partial}{\partial t} \lim_{p' \to p} \mathcal{T}_{t i} \left(G^+(p, p') - H_\ell (p, p') \right) = 0, \label{Tindep-2}\\
& \frac{\partial}{\partial t} \lim_{p' \to p} \mathcal{T}_{ij} \left(G^+(p, p') - H_\ell (p, p') \right) =0, \label{Tindep-3}
\end{align}
\end{subequations}
guarantee that the constraints hold at all times. A condition on the states that guarantees that eq. \eqref{Tindep-1}, \eqref{Tindep-2} and \eqref{Tindep-3} hold is that they be stationary, i.e., time-translation independent. We shall see below, in sec. \ref{sec:States} that for $A_h = -\Delta_h + m^2 + \xi R > 0$ as an operator in $L^2(S, \dd \vol_h)$, stationarity is not only sufficient, but also necessary for the fulfillment of conditions \eqref{Tindep}.

We now proceed to study constraints \eqref{semiEFE3-1}, \eqref{semiEFE3-2} and \eqref{semiEFE3-3} in more detail. Let us first write them down explicitly by appropriately choosing local coordinates for the parallel-transport propagator. We have that locally
\begin{align}
g^\mu{}_{\mu'}(p,p') = e^{\mu}_I(p) e^I_{\mu'}(p'),
\end{align}
where $e^{\mu}_I$ is the tetrad field with Lorentz indices denoted by capital latin letters and $e_{\mu}^I$ its dual, such that $e^{\mu}_Ie_{\nu}^I = \delta^\mu{}_\nu$ and $e_{\mu}^I e^{\mu}_J = \delta^I{}_J$. Locally
\begin{align}
g_{\mu \nu } = \eta_{IJ} e_\mu^I e_\nu^J, & \quad \eta_{IJ} = g_{\mu \nu} e_I^\mu e_J^\nu, 
\end{align}
which allows us to choose in the ultrastatic case $e_t^T = 1$, $e_t^{X^i} = 0$ and $e_i^T = 0$, as well as $e_T^t = 1$, $e_T^i = 0$ and $e_{X^i}^t = 0$, with the rest of the tetrad components forming the triad soldering the 3-metric components of $g_{\mu \nu}$, $h_{ij}$, to the Euclidean 3-metric. I.e., we choose
\begin{subequations}
\begin{align}
g^{t}{}_{t'}(p,p') & = 1, &   \quad  g^{t}{}_{i'}(p,p') &= 0 , \\
g^{i}{}_{t'}(p,p') &=  0, &  \quad  g^{i}{}_{j'}(p,p') &= h^{i}{}_{j'}(p,p'),
\end{align}
\end{subequations}
where $h^{i}{}_{j'}(p,p')$ is the parallel propagator constructed out of the triad that solders $h_{ij}$ to the Euclidean metric and its dual triad.

The point-splitting operator appearing in eq. \eqref{semiEFE3-1}, \eqref{semiEFE3-2} and \eqref{semiEFE3-3} takes the form
\begin{subequations}
\begin{align}
\mathcal{T}_{tt} &  = (1-2\xi ) \partial_t \partial_{t'} -\left(2\xi - \frac{1}{2}\right) \left( - \partial_t \partial_{t'} + h^{ij'} \nabla_i \nabla_{j'} \right)  + \frac{1}{2} m^2 + 2\xi \Big[  - \partial_t \partial_{t'} -(-\partial_t^2 + \Delta_h) + \frac{1}{4} \R \Big], \\
\mathcal{T}_{ti} &  = (1-2\xi ) h_{i}\,^{j'}\partial_t \nabla_{j'} - 2\xi  h_{i}\,^{j'} \partial_{t'} \nabla_{j'} , \\
\mathcal{T}_{ij} &  = (1-2\xi ) h_{j}\,^{j'}\nabla_i \nabla_{j'} +\left(2\xi - \frac{1}{2}\right) h_{ij}h^{kl'} \nabla_k \nabla_{l'}  - \frac{1}{2} h_{ij} m^2 + 2\xi \Big[  - h_{i}\,^{i'} h_{j}\,^{j'} \nabla_{i'} \nabla_{j'} + h_{ij} h^{kl}\nabla_k \nabla_l + \frac{1}{2}G_{ij} \Big].
\end{align}
\end{subequations}

Evaluating eq. \eqref{semiEFE3-1}, \eqref{semiEFE3-2} and \eqref{semiEFE3-3} on the initial value surface yields the following constraints for the initial data of the Wightman function equation of motion \eqref{KG3},
\begin{subequations}
\label{semiEFEKG-constraints}
\begin{align}
\mathcal{C}_{tt} & :=  \lim_{x' \to x} \left[ (1-2\xi ) \omega(\pi(x) \pi(x')) -\left(2\xi - \frac{1}{2}\right) \left( - \omega(\pi(x) \pi(x')) + h^{ij'} \nabla_i \nabla_{j'} \omega(\varphi(x) \varphi(x')) \right)  + \frac{1}{2} m^2 \omega(\varphi(x) \varphi(x'))\right. \nonumber \\
 & \left. - 2 \xi \left( \Delta_h + \frac{1}{4} \R \right) \omega(\varphi(x) \varphi(x')) - \mathcal{T}_{tt} H_\ell (x, x') \right] - \frac{1}{8 \pi G_{\rm N}} \left[  \frac{1}{2} \R - \Lambda + \frac{2 G_{\rm N}}{\pi}  [v_1] - \frac{\alpha}{2} \left(\Delta_h \R - \R^{ij}\R_{ij}\right) \right.  \nonumber \\
 & \left. - \beta \left(2 \Delta_h \R - \frac{1}{2} \R^2 \right) \right] = 0, \label{semiEFE-constraints1} \\
\mathcal{C}_{ti} & := \lim_{x' \to x}  \left[ (1-2\xi ) h_{i}\,^{j'}\nabla_{j'} \omega(\pi(x) \varphi(x')) - 2\xi  h_{i}\,^{j'} \nabla_{j'} \omega(\varphi(x) \pi(x'))  - \mathcal{T}_{ti} H_\ell (x, x') \right] = 0, \label{semiEFE-constraints2} \\
\mathcal{C}_{ij} & :=  \lim_{x' \to x} \left[ (1-2\xi ) h_{j}\,^{j'}\nabla_i \nabla_{j'} \omega(\varphi(x) \varphi(x')) +\left(2\xi - \frac{1}{2}\right) h_{ij}h^{kl'} \nabla_k \nabla_{l'} \omega(\varphi(x) \varphi(x'))  - \frac{1}{2} h_{ij} m^2 \omega(\varphi(x) \varphi(x')) \right. \nonumber \\  
 & \left. + 2\xi \Big[  - h_{i}\,^{i'} h_{j}\,^{j'} \nabla_{i'} \nabla_{j'} \omega(\varphi(x) \varphi(x')) + h_{ij} h^{kl}\nabla_k \nabla_l \omega(\varphi(x) \varphi(x')) + \frac{1}{2}G_{ij} \omega(\varphi(x) \varphi(x')) \Big]- \mathcal{T}_{ij} H_\ell (x, x') \right] \nonumber \\
 & - \frac{1}{8 \pi G_{\rm N}} \left[ \R_{ij} - \frac{1}{2} \R h_{ij} + \Lambda h_{ij} - \frac{2 G_{\rm N}}{\pi} g_{ij} [v_1] - \alpha I_{ij} - \beta J_{ij} \right] = 0, \label{semiEFE-constraints3}
\end{align}
\end{subequations}
where we have used the Klein-Gordon equation \eqref{KG3} on the initial value surface to eliminate terms of the form $\partial_t \omega(\pi(x) \varphi(x'))$ and $\partial_t' \omega(\varphi(x) \pi(x'))$ in favour of initial data.

Our first observation is that the constraint $\mathcal{C}_{ti} = 0$ \eqref{semiEFE-constraints2} is trivially satisfied, and the stress-energy tensor takes a block-diagonal form for any (not necessarily stationary) Hadamard state in an ultrastatic spacetime, i.e, we can drop eq. \eqref{semiEFE-constraints2} altogether. We now show this.

\begin{lemma}
\label{LemmaCti}
{\rm (a)} The constraint $\mathcal{C}_{ti} = 0$ \eqref{semiEFE-constraints2} is trivially satisfied for any Hadamard state in an ultrastatic spacetime. {\rm (b)} Additionally, if $(h,S)$ is a smooth Riemannian spacetime, then one has the following $\delta$-function representation in $\mathscr{D}'(S)$,
\begin{align}
\delta_h(x, x') = \lim_{\epsilon \to 0^+} - \frac{ \epsilon}{2(2\pi)^2} \left( -\frac{|\Delta_h(x,x')|^{1/2}}{(\sigma_h(x,x') + \frac{1}{2} \epsilon^2)^2} + \frac{v(x,x')}{\sigma_h(x,x') + \frac{1}{2}\epsilon^2}   \right) \label{deltaFormula}
\end{align}
for $x$ and $x'$ in a convex normal neighbourhood.
\end{lemma}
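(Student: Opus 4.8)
The plan is to prove (a) and (b) together, since both follow from the explicit ultrastatic Hadamard structure \eqref{HadamardFormUltrastatic}, and (b) supplies exactly the ingredient that closes (a).

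For (a), the first step is to eliminate the parametrix. Using $\omega(\pi(x)\varphi(x')) = \partial_t G^+|_S$ and $\omega(\varphi(x)\pi(x')) = \partial_{t'}G^+|_S$ and writing $w_\ell = G^+ - H_\ell$, the singular contributions subtracted through $-\mathcal{T}_{ti}H_\ell$ cancel identically against those of $G^+$, so that \eqref{semiEFE-constraints2} collapses to $\mathcal{C}_{ti} = \lim_{x'\to x}\mathcal{T}_{ti}w_\ell$, involving only the smooth mixed data $w_{\pi\varphi} := \partial_t w_\ell|_S$ and $w_{\varphi\pi} := \partial_{t'}w_\ell|_S$. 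The second step is to observe that both $G^+$ and $H_\ell$ reproduce the same equal-time canonical commutation relation (CCR): for $H_\ell$ this is precisely part (b), so that $w_{\varphi\pi} - w_{\pi\varphi} = (\partial_{t'} - \partial_t)(G^+ - H_\ell)|_S = \ii\delta_h - \ii\delta_h = 0$. This forces $\mathcal{C}_{ti} = (1-4\xi)\lim_{x'\to x} h_i{}^{j'}\nabla_{j'}w_{\pi\varphi}$. The concluding step is a parity argument: the ultrastatic background admits the reflection isometry $\Theta:(t,x^i)\mapsto(-t,x^i)$, under which $H_\ell$ is invariant (it depends on $t,t'$ only through $\sigma_\epsilon$, with $\Delta^{1/2}$ and $v$ time-independent, cf. \eqref{UltraStat-sigma} and \eqref{UltraStat-VanVleck}) while $\mathcal{T}_{ti}$ is odd, so that the mixed renormalised datum is parity-odd and vanishes in the coincidence limit. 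I expect this last vanishing to be the main obstacle: the reflection heuristic should be upgraded to a genuine proof by expanding $w_{\pi\varphi}$ in Riemann normal coordinates and applying Synge's rule, checking that every surviving term in $\lim_{x'\to x}\nabla_{j'}w_{\pi\varphi}$ carries an odd number of separation vectors or an odd time index. (For stationary states the step is immediate, since $\partial_t w_\ell \equiv 0$ makes $w_{\pi\varphi}$ vanish outright.)

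For (b), I would read the identity as the CCR $\omega(\varphi\pi) - \omega(\pi\varphi) = \ii\delta_h$ evaluated on the explicit singular part of \eqref{HadamardFormUltrastatic}. Applying $(\partial_{t'} - \partial_t)$, using $\partial_t\sigma_\epsilon = -\partial_{t'}\sigma_\epsilon$ and restricting to $t = t' = 0$, where $\sigma_\epsilon = \sigma_h + \tfrac12\epsilon^2$ and $\Delta^{1/2}$ reduces to $|\Delta_h|^{1/2}$ up to the signature factor of \eqref{UltraStat-VanVleck}, reproduces the stated $\epsilon$-family. It then remains to prove convergence to $\delta_h$ in $\mathscr{D}'(S)$, which I would do as a nascent-delta estimate: first, pointwise vanishing for $x\neq x'$ as $\epsilon\to 0^+$; second, fixing the overall constant from the leading $|\Delta_h|^{1/2}/(\sigma_h + \tfrac12\epsilon^2)^2$ term through a standard radial integral in geodesic polar coordinates centred at $x$; and third, showing that the curvature corrections in $|\Delta_h|^{1/2} = 1 + O(\sigma_h)$ and the subleading $v/(\sigma_h + \tfrac12\epsilon^2)$ contribution are negligible in the limit. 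The main obstacle here is this third point, namely controlling the corrections uniformly over the support of a test function inside the convex normal neighbourhood; I expect it to be routine once the radial integrals are organised, since each correction carries extra positive powers of the geodesic distance that suppress it relative to the leading approximate identity.
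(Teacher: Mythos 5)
Your part (a) tracks the paper's own route up to the reduction $\mathcal{C}_{ti}=(1-4\xi)\lim_{x'\to x}h_i{}^{j'}\nabla_{j'}w_{\pi\varphi}$: the paper's rearrangement \eqref{CijVanish1} likewise combines the canonical commutation relations for $G^+$ with the identity \eqref{IdentityDelta} for $H_\ell$ (your appeal to part (b)) to kill the antisymmetric combination. The genuine gap is your concluding parity step. The reflection $(t,x^i)\mapsto(-t,x^i)$ leaves $H_\ell$ invariant, but it does not leave $G^+$ invariant for an arbitrary Hadamard state, so it says nothing about $w_\ell=G^+-H_\ell$; and the proposed normal-coordinate/Synge upgrade cannot close it either, because Synge's rule only relates $[w_{\ell;tj'}]+[w_{\ell;t'j'}]$ to $\partial_t[w_{\ell;j'}]$ and leaves $[w_{\ell;tj'}]$ as an independent, state-dependent datum. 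Indeed it is generically nonzero: for a coherent state built on a classical solution $\phi$ the regular part acquires the term $\phi(\x)\phi(\x')$ and your residual becomes $(1-2\xi)\dot\phi\,\phi_{,i}-2\xi\,\phi\,\dot\phi_{,i}$, the classical momentum density of a moving wave packet. What does hold is that the residual vanishes for time-translation-invariant states, since then $\partial_t[w_{\ell;j'}]=0$, which together with your $w_{\pi\varphi}=w_{\varphi\pi}$ forces $[w_{\ell;tj'}]=0$. (Your parenthetical ``$\partial_t w_\ell\equiv 0$ for stationary states'' is not quite right, as $w_\ell$ still depends on $t-t'$, but the conclusion follows by the argument just given.) Note that your residual term is precisely the $\partial_t\lim(\cdots)$ line isolated in \eqref{CijVanish1}, which the paper's proof also discards without further comment; you have located the crux correctly, but the parity heuristic does not supply the missing step for general (non-stationary) states.

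For part (b) your route is genuinely different. You propose a direct approximate-identity argument: obtain the $\epsilon$-family by applying $(\partial_t-\partial_{t'})$ to the singular structure at $t=t'=0$ and then prove convergence to $\delta_h$ by radial integration in geodesic polar coordinates, estimating the $|\Delta_h|^{1/2}=1+O(\sigma_h)$ and $v/(\sigma_h+\tfrac12\epsilon^2)$ corrections. The paper instead derives the same $\epsilon$-family \eqref{HaInitial} but then identifies $H_\ell^{\rm a}=\tfrac{\ii}{2}E$ from the symmetry of $G^{(1)}$, $H^{\rm s}_\ell$ and $w_\ell$, and reads off the limit from the Cauchy data $-\partial_tE|_{t=t'=0}=\delta_h$ of the causal propagator, relegating a direct (flat-space) verification to App. \ref{Appendix}. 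Your version is more self-contained and does not rely on the local identification with the commutator function, but all of the analytic work sits in the step you defer (uniform control of the curvature corrections on the support of the test function), and you should fix the regularisation as $\sigma_\epsilon=\sigma+\tfrac{\ii}{2}\epsilon(t-t')+\tfrac12\epsilon^2$, as the paper does mid-proof, and check your overall normalisation against the explicit integral in App. \ref{Appendix} before trusting the constant in \eqref{deltaFormula}.
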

\begin{proof}
To see this, we first use the canonical commutation relations together with Synge's rule $[A]_{;a} = [A_{;a}] + [A_{;a'}]$, where $A$ is a regular bi-tensor and the brackets indicate the standard coincidence-limit notation $[A] := \lim_{\x'\to \x} A$. We have that
\begin{align}
\mathcal{C}_{ti} 
& = \lim_{x' \to x}  \frac{1}{2}\left[ (1-2\xi ) h_{i}\,^{j'}\nabla_{j'} [\omega(\pi(x) \varphi(x'))- \partial_{t}H_\ell(x,x')] - 2\xi  h_{i}\,^{j'} \nabla_{j'} [\omega(\varphi(x) \pi(x'))  - \partial_{t'} H_\ell (x, x')] \right] \nonumber \\
& + \partial_t \lim_{x' \to x}  \frac{1}{2}\left[ (1-2\xi ) h_{i}\,^{j'}\nabla_{j'} [\omega(\varphi(x) \varphi(x'))- H_\ell(x,x')] - 2\xi  h_{i}\,^{j'} \nabla_{j'} [\omega(\varphi(x) \varphi(x'))  -  H_\ell (x, x')] \right] \nonumber \\
& - \lim_{x' \to x}  \frac{1}{2}\left[ (1-2\xi ) h_{i}\,^{j'}\nabla_{j'} [\omega(\varphi(x) \pi(x'))- \partial_{t'}H_\ell(x,x')] - 2\xi  h_{i}\,^{j'} \nabla_{j'} [\omega(\pi(x) \varphi(x'))  -  \partial_t H_\ell (x, x')] \right] \nonumber \\
& = \lim_{x' \to x}  \left[ \left(\frac{1}{2}-\xi \right) h_{i}\,^{j'}\nabla_{j'} [- \ii \delta_h (x,x')- (\partial_{t} - \partial_{t'}) H_\ell(x,x')] - \xi  h_{i}\,^{j'} \nabla_{j'} [\ii \delta_h(x,x')  - (\partial_{t'}-\partial_t) H_\ell (x, x')] \right] \nonumber \\
& = \lim_{x' \to x}  \left[ \frac{1}{2} h_{i}\,^{j'}\nabla_{j'} [- \ii \delta_h (x,x')- (\partial_{t} - \partial_{t'}) H_\ell(x,x')]  \right].
\label{CijVanish1}
\end{align}

The derivatives on the last two equalities of eq. \eqref{CijVanish1} are distributional. We now explicitly compute this term. Let us split $H_\ell = H_\ell^{\rm s} + H_\ell^{\rm a}$, where $H_\ell^{\rm s}(p,q) := \frac{1}{2} (H_\ell(p,q) + H_\ell(q,p))$ and $H_\ell^{\rm a}(p,q) := \frac{1}{2} (H_\ell(p,q) - H_\ell(q,p))$ are the symmetric and antisymmetric parts of $H_\ell$ respectively. 
Choosing the arbitrary regularisation time function in $\sigma_\epsilon$ as $\sigma_\epsilon = \sigma + \frac{\ii}{2} \epsilon(t-t') + \frac{1}{2}\epsilon^2$ we can see that
\begin{subequations}
\begin{align}
(\partial_{t} - \partial_{t'}) & H^{\rm s}_\ell(p,p')  = \frac{1}{2(2 \pi)^2} \left( -\frac{\Delta^{1/2}}{\sigma_\epsilon^2}\left(-(t-t') +  \frac{1}{2}\ii \epsilon \right) -(t-t') \left(\frac{v -v_0}{\sigma} \right) \ln \left(\frac{\sigma_\epsilon}{\ell^2} \right) + \frac{v}{\sigma_\epsilon} \left(-(t-t') +  \frac{1}{2}\ii \epsilon \right)   \right) \nonumber \\
& + \frac{1}{2(2 \pi)^2} \left( -\frac{\Delta^{1/2}}{\sigma_{-\epsilon}^2}\left(-(t-t') - \frac{1}{2}\ii \epsilon \right) -(t-t') \left(\frac{v -v_0}{\sigma} \right) \ln \left(\frac{\sigma_{-\epsilon}}{\ell^2} \right) + \frac{v}{\sigma_{-\epsilon}} \left(-(t-t') -  \frac{1}{2}\ii \epsilon \right)   \right), \\
(\partial_{t} - \partial_{t'}) & H^{\rm a}_\ell(p,p')  = \frac{1}{2(2 \pi)^2} \left( -\frac{\Delta^{1/2}}{\sigma_\epsilon^2}\left(-(t-t') +  \frac{1}{2}\ii \epsilon \right) -(t-t') \left(\frac{v -v_0}{\sigma} \right) \ln \left(\frac{\sigma_\epsilon}{\ell^2} \right) + \frac{v}{\sigma_\epsilon} \left(-(t-t') +  \frac{1}{2}\ii \epsilon \right)   \right) \nonumber \\
& - \frac{1}{2(2 \pi)^2} \left( -\frac{\Delta^{1/2}}{\sigma_{-\epsilon}^2}\left(-(t-t') -  \frac{1}{2}\ii \epsilon \right) -(t-t') \left(\frac{v -v_0}{\sigma} \right) \ln \left(\frac{\sigma_{-\epsilon}}{\ell^2} \right) + \frac{v}{\sigma_{-\epsilon}} \left(-(t-t') -  \frac{1}{2}\ii \epsilon \right)  \right),
\end{align}
\end{subequations}
and thus at $t = t' = 0$ we see that distributionally
\begin{subequations}
\begin{align}
(\partial_{t} - \partial_{t'}) H^{\rm s}_\ell(x,x') & = 0, \\
(\partial_{t} - \partial_{t'}) H^{\rm a}_\ell(x,x') & = \frac{\ii \epsilon}{2(2\pi)^2} \left( -\frac{|\Delta_h|^{1/2}}{(\sigma_h + \frac{1}{2} \epsilon^2)^2} + \frac{v}{\sigma_h + \frac{1}{2}\epsilon^2}   \right). \label{HaInitial}
\end{align}
\end{subequations}

Thus, only the antisymmetric part of the Hadamard singular structure contributes on the initial value surface. $H^{\rm a}$ can be identified immediately. If $\omega$ is a Hadamard state with Wightman two-point function $G^+$, then in a convex normal neighbourhood
\begin{align}
G^+(\x, \x') = \frac{1}{2}G^{(1)}(\x, \x') + \frac{\ii}{2} E(\x, \x') = H^{\rm s}_\ell(\x, \x') + H^{\rm a}_\ell(\x, \x') + \omega_\ell(\x, \x').
\end{align}

Since $G^{(1)}$, $H^{\rm s}_\ell$ and $\omega_\ell$ are symmetric and $H^{\rm a}_\ell$ and $E$ are antisymmetric, we have that $H^{\rm a}_\ell = \frac{\ii}{2} E$. Using the initial value condition for the causal propagator
\begin{align}
-\partial_t E(p, p')|_{t = t' = 0} = \partial_{t'} E(p, p')|_{t = t' = 0} = \delta_h(x, x'),
\end{align}
it follows immediately that
\begin{align}
(\partial_{t} - \partial_{t'}) H^{\rm a}_\ell(x,x') = - \ii \delta_h(x,x').
\label{IdentityDelta}
\end{align}

Inserting eq. \eqref{IdentityDelta} into the right-hand side of eq. \eqref{CijVanish1} we prove part (a). Combining eq. \eqref{HaInitial} with eq. \eqref{IdentityDelta}, we obtain the $\delta$-function formula \eqref{deltaFormula} of part (b).
\end{proof}

An example and further remarks on the $\delta$-function formula appear in App. \ref{Appendix}


We now proceed to interpret the meaning of constraints \eqref{semiEFE-constraints1} and \eqref{semiEFE-constraints3}. Let us note first that, by manipulations like the ones in Sec. IIC of \cite{Decanini:2005eg} (see especially eq. (71)), the constraints \eqref{semiEFE-constraints1} and \eqref{semiEFE-constraints3} can be written as
\begin{align}
\mathcal{C}_{tt} & = \frac{1}{2(2 \pi)^2} \left[-[w_{\ell,tt}] - \frac{1}{2} \left(2 \xi - \frac{1}{2} \right) \Delta_h [w_\ell] + [v_1] \right] \!- \frac{1}{8 \pi G_{\rm N}} \!\left[  \frac{1}{2} \R - \Lambda - \frac{\alpha}{2} \left(\Delta_h \R  - \R^{ij}\R_{ij}\right) - \beta \left(2 \Delta_h \R - \frac{1}{2} \R^2 \right) \right]\!\! = 0,
\label{Constraint1-DF} \\
\mathcal{C}_{ij} & = \frac{1}{2(2\pi)^2} \left[ -[w_{\ell;ij}] + \frac{1}{2}(1-2\xi) [w_\ell]_{;ij} + \frac{1}{2} \left(2 \xi -\frac{1}{2} \right) h_{ij} \Delta_h [w_\ell] + \xi \R_{ij} [w_\ell] - h_{ij} [v_1] \right] \nonumber \\
& - \frac{1}{8 \pi G_{\rm N}} \left[ \R_{ij} - \frac{1}{2} \R h_{ij} + \Lambda h_{ij} - \alpha I_{ij} - \beta J_{ij} \right] = 0,
\label{Constraint3-DF}
\end{align}
where $[w_\ell] = \lim_{p'\to p} (G^+(p,p')-H_\ell^{\rm sing}(p,p') )$ is the diagonal of the regular part of the state, cf. eq. \eqref{HadamardCondition}, and $[w_{\ell,tt}]$ and $[w_{\ell;ij}]$ are defined analogously. 

Taking the trace of eq. \eqref{Constraint3-DF} with respect to the 3-metric $h_{ij}$, we have
\begin{align}
h^{ij} \mathcal{C}_{ij} & = \frac{1}{2(2\pi)^2} \left[ -[\Delta_h w_\ell] + \left(2 \xi -\frac{1}{4} \right) \Delta_h [w_\ell] + \xi \R [w] - 3 [v_1] \right] \nonumber \\
& - \frac{1}{8 \pi G_{\rm N}} \left[ - \frac{1}{2} \R  + 3 \Lambda - \alpha \left( - \frac{3}{2} \Delta_h \R - \frac{1}{2} \R^{ij}\R_{ij} \right) - \beta \left( - 4 \Delta_h \R - \frac{1}{2} \R^2 \right) \right] = 0. 
\label{TraceCij}
\end{align}

Using eq. (52) in \cite{Decanini:2005eg} we have the relation
\begin{align}
-[w_{\ell,tt}] = [A_h w_\ell] - 6 [v_1] = -[\Delta_h w_\ell] + (m^2 + \xi \R)[w_\ell] - 6 [v_1],
\end{align}
which allows us to eliminate the term $[w_{\ell,tt}]$ in eq. \eqref{Constraint1-DF}, and combine it with the trace eq. \eqref{TraceCij} into
\begin{align}
\mathcal{C}_{tt} & = \frac{1}{2(2 \pi)^2} \left[ \left(\frac{1}{2} - 3 \xi\right) \Delta_h [w_\ell] + m^2 [w_\ell] - 2 [v_1] \right] - \frac{1}{8 \pi G_{\rm N}} \left[ \R - 4 \Lambda - 2 \alpha \Delta_h \R  - 6 \beta  \Delta_h \R \right] = 0.
\label{ConstraintTrAnom} 
\end{align}

Eq. \eqref{ConstraintTrAnom} implies the trace equation
\begin{align}
{\rm Tr} (G_{ab}) + 4 \Lambda = 8 \pi G_{\rm N} \omega({\rm Tr} (T_{ab}))
\end{align}

Setting $m = 0$ and $\xi = 1/6$ the constraint $\mathcal{C}_{tt} = 0$ \eqref{ConstraintTrAnom} is state-independent -- the only term that survives for the stress-energy tensor is the trace anomaly term -- and provides a criterion for the ultrastatic spacetime semiclassical solutions that admit conformally coupled fields.

If the field is not conformally coupled, then eq. \eqref{ConstraintTrAnom} is an elliptic equation for $[w_\ell]$, which can in principle be solved with an appropriate boundary condition. Inserting its solution into \eqref{Constraint3-DF} one then obtains a single set of constraints for the diagonals $[w_{\ell;ij}]$. It is important to emphasise, however, that $w_\ell$ is not enough to define the two-point function of a state, for the positivity condition is not guaranteed to hold.


To summarise the discussion on the constraints of the theory in this section, we have that solutions to the semiclassical gravity equations must satisfy
\begin{subequations}
\begin{align}
\mathcal{C}_{tt} = 0, & \quad \quad \quad \mathcal{C}_{ij} = 0, \label{Constr-IVsurface} \\
\partial_t {\mathcal{C}_{tt}} = 0, & \quad \quad \quad \partial_t \mathcal{C}_{ij} = 0, \label{Constr-Conservation}
\end{align}
\end{subequations}
where $\mathcal{C}_{tt} = 0$ is defined by eq. \eqref{semiEFE-constraints1}, or more conviniently by eq. \eqref{ConstraintTrAnom}, and where $\mathcal{C}_{ij} = 0$ is defined by eq. \eqref{semiEFE-constraints3}, or more conviniently by eq. \eqref{Constraint3-DF}. The conservation of the constraints \eqref{Constr-Conservation} are automatically satisfied for time-translation invariant states that satisfy constraints \ref{Constr-IVsurface} and whose Wightman two-point function possesses the Hadamard singular structure. We shall study these states below in Sec. \ref{sec:States} below.

\section{States from Cauchy data in ultrastatic semiclassical gravity}
\label{sec:States}

We have seen above in Sec. \ref{subsec:Constraints} that the relevant states of semiclassical gravity in ultrastatic spacetimes are those that satisfy the constraints \eqref{Constr-IVsurface} on the initial value surface (at $t = 0$), and such that the constraints are preserved at all times, cf. \eqref{Constr-Conservation}. In this section, we study the class of states for which the conservation of constraints \eqref{Constr-Conservation} is guaranteed, which are time-translation invariant states. We shall assume that 
\begin{align}
A_h = -\Delta_h + m^2 + \xi R \text{ is positive in }L^2(S, \dd \vol_h),
\label{PositiveAh}
\end{align}
i.e., that for any $v \in L^2(S, \dd \vol_h)$, the map $v \mapsto (v, A_h v)_{L^2(S, \dd \vol_h)}$ is positive, and we give criteria for the Wightman function solutions only in this case.


We shall see that if \eqref{PositiveAh} holds, the initial data of constraint-preserving states can be determined only in terms of the datum $\omega(\varphi(x) \varphi(x'))$ (see \eqref{WightmanIVP}) since the condition of time-translation invariance will determine the form of the initial value surface correlation functions $\omega(\varphi(x) \pi(x'))$, $\omega(\pi(x) \varphi(x'))$ and $\omega(\pi(x) \pi(x'))$. 

To this end, the first step is to obtain an explicit expression for the two-point function of the field in terms of initial data. Eq. (3.11) in \cite{Juarez-Aubry:2019jon} shows how to do so as distributional kernels 
with the aid of the causal propagator of the Klein-Gordon equation. The distributional kernel of the Wightman function in terms of initial-data kernels is given by eq. (3.19) in \cite{Juarez-Aubry:2019jon}. 

In our case of interest, i.e., if \eqref{PositiveAh} holds, then $A := \overline{A_h}$ is a positive, invertible, self-adjoint operator, in terms of which we can obtain the causal propagator, and hence the Wightman function in closed form. That $A = \overline{A_h}$ is positive, invertible and self-adjoint is a consequence of the following theorem by Kay:
\begin{thm}[Theorem 7.1 in \cite{Kay:1978yp}]
\label{Thm:Kay}
Let $(M = \mathbb{R} \times S, g)$ be ultrastatic and globally hyperbolic, and suppose that \eqref{PositiveAh} holds. Then $A_h$ is essentially self-adjoint on $C_0^\infty(S) \subset L^2(S, \dd \vol_h)$ and $A = \overline{A_h}$ is positive and invertible. \qed
\end{thm}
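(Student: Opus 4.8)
The plan is to prove the three assertions in turn, with essential self-adjointness carrying essentially all of the analytic weight. Throughout I write $A_h = -\Delta_h + V$ with $V := m^2 + \xi R \in C^\infty(S)$; this is a symmetric, second-order elliptic operator on $C_0^\infty(S)$ whose associated quadratic form $Q(\phi) := \langle \phi, A_h \phi\rangle = \int_S ( |\nabla\phi|^2 + V |\phi|^2 )\, \dd\vol_h$ is bounded below by hypothesis \eqref{PositiveAh}. The one geometric input I would use is that global hyperbolicity of the ultrastatic spacetime forces $(S,h)$ to be geodesically complete, hence complete as a metric space, as already recorded after eq. \eqref{UltraStat-Riemann}. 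For a symmetric operator bounded below by some $c$, essential self-adjointness is equivalent to the vanishing of a single deficiency space: it suffices to show that $\ker(A_h^* - \lambda) = \{0\}$ for one real $\lambda$ lying strictly below $c$. I would fix such a $\lambda < 0$ and reduce the entire problem to showing that any $u \in L^2(S,\dd\vol_h)$ with $(A_h^* - \lambda)u = 0$ vanishes.

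The heart of the argument is a localisation estimate in which completeness is used decisively. First, since $A_h - \lambda$ is elliptic with smooth coefficients, elliptic regularity upgrades any distributional solution $u$ of $(A_h^* - \lambda)u = 0$ to a smooth function. Next, completeness of $(S,h)$ lets me build cutoffs $\chi_n \in C_0^\infty(S)$ with $0 \le \chi_n \le 1$, $\chi_n \uparrow 1$ pointwise, and $\|\nabla\chi_n\|_\infty \to 0$ as $n\to\infty$; concretely one smooths the Lipschitz distance function $x \mapsto d(x,p)$, whose gradient is bounded by $1$, and rescales, which is precisely the step that fails on an incomplete manifold. Because $u$ is smooth and $\chi_n$ compactly supported, $\chi_n u \in C_0^\infty(S)$ and one may evaluate $Q_\lambda(\chi_n u)$, where $Q_\lambda := Q - \lambda\|\cdot\|^2$. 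Integrating by parts and using the eigenequation $(A_h-\lambda)u = 0$ to eliminate $\Delta_h u$ yields the clean localisation identity
\begin{align}
Q_\lambda(\chi_n u) = \int_S |\nabla\chi_n|^2\, |u|^2 \, \dd\vol_h \le \|\nabla\chi_n\|_\infty^2\, \|u\|_{L^2}^2 \longrightarrow 0 .
\end{align}
On the other hand, since $Q \ge 0$ and $-\lambda > 0$, the form is coercive, $Q_\lambda(\chi_n u) \ge -\lambda\,\|\chi_n u\|_{L^2}^2$, so $\|\chi_n u\|_{L^2}\to 0$; as $\chi_n u \to u$ pointwise with $|\chi_n u|\le |u|\in L^2$, dominated convergence gives $\|u\|_{L^2}=0$. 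Hence both deficiency spaces are trivial, $A_h$ is essentially self-adjoint, and $A = \overline{A_h}$ is self-adjoint. I expect this step to be the main obstacle: the substance is extracting the controlled cutoffs from completeness and recognising the vanishing commutator identity above, after which the conclusion is immediate.

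Positivity and invertibility of $A$ then follow from spectral theory. The inequality $Q(\phi) \ge 0$ on $C_0^\infty(S)$ passes to the closure of the form, hence to the self-adjoint operator $A$, so $\mathrm{spec}(A)\subseteq[0,\infty)$, which is the asserted positivity. For invertibility I would use that the hypothesis \eqref{PositiveAh} provides a \emph{strict} lower bound, $Q(\phi)\ge\epsilon\|\phi\|_{L^2}^2$ for some $\epsilon>0$ (the positive-energy gap needed to define $A^{-1/2}$); this survives in the closure and gives $\mathrm{spec}(A)\subseteq[\epsilon,\infty)$. Consequently $0\in\rho(A)$, so $A$ is invertible with $\|A^{-1}\|\le\epsilon^{-1}$, as claimed.
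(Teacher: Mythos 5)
The paper does not actually prove this statement: it is imported verbatim as Theorem 7.1 of Kay's 1978 paper \cite{Kay:1978yp} and closed immediately with an end-of-proof mark, the only original content being the subsequent remark that Kay's hypothesis of geodesic completeness of $S$ is automatic for an ultrastatic globally hyperbolic spacetime. Your proposal therefore supplies an argument the paper deliberately omits, and what you give is a correct, standard route: reducing essential self-adjointness of a semibounded symmetric operator to the triviality of the single deficiency space $\ker(A_h^*-\lambda)$ at one real $\lambda$ below the lower bound, using elliptic regularity to make the deficiency vector smooth, and then combining the localisation identity $Q_\lambda(\chi_n u)=\int_S|\nabla\chi_n|^2\,|u|^2\,\dd\vol_h$ with cutoffs of uniformly small gradient --- the step where completeness, hence global hyperbolicity, enters decisively. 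I checked the localisation identity; it holds as stated, and the integrations by parts are legitimate because $\chi_n u$ is smooth and compactly supported, so no global integrability of $\nabla u$ is needed. This is the classical Gaffney-type argument for $-\Delta_h+V$ on a complete manifold with a semibounded form, and it establishes essential self-adjointness and $\mathrm{spec}(A)\subseteq[0,\infty)$ exactly as you claim.

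One caveat deserves to be made explicit. The conclusion that $A$ is \emph{invertible} does not follow from the literal reading of \eqref{PositiveAh} as $(v,A_hv)>0$ for $v\neq 0$: strict positivity of the form on the core does not survive closure as a spectral gap (for instance $-\dd^2/\dd x^2$ on $\mathbb{R}$ has strictly positive form on $C_0^\infty$ yet $0$ lies in its spectrum, so it is not boundedly invertible). You implicitly recognise this by upgrading the hypothesis to $Q(\phi)\geq\epsilon\|\phi\|_{L^2}^2$, which is what the invertibility conclusion forces in any case; you should flag that this is a strengthening of the paper's phrasing of \eqref{PositiveAh} rather than a consequence of it. With that reading, positivity and bounded invertibility of $A=\overline{A_h}$ pass to the closure exactly as you argue, and the proof is complete.
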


\begin{rem}
Theorem 7.1 in \cite{Kay:1978yp} actually requires that $S$ be geodesically complete, but this is an unnecessary requirement in the ultrastatic, globally hyperbolic case. See, e.g., theorem 1 on p. 130 of Fulling's monograph \cite{Fulling:1989nb}.
\end{rem}

We now proceed to obtain the causal propagator of the theory in terms of the functional calculus of $A$. We shall use def. 4 and 5 of \cite{Juarez-Aubry:2020aoo} for the causal propagator, which are particularly well-adapted to self-adjoint problems.

\begin{defn}[Def. 4 and 5 of \cite{Juarez-Aubry:2020aoo}]
\label{Def:CausalPropagator}
Let $(M = \mathbb{R}\times S,g)$ be an ultrastatic, globally hyperbolic spacetime. Let $\partial_t^2 + A_h$ be the Klein-Gordon operator with $A_h$ a symmetric, positive operator in $L^2(S, \dd\vol_h)$ defined on smooth functions of compact support, and let $A = \overline{A_h}$ be self-adjoint in $L^2(S, \dd\vol_h)$ with domain $D(A)$. We call an operator $\mathsf E:C^2_0(\mathbb R, D(A)) \to C^2(\mathbb R, D(A))$ \emph{an integral operator} (here $D(A)$ has the induced topology from $L^2(S, \dd\vol_h)$), and $E(t-t')$  the \emph{integral kernel of}  $ \mathsf E$, if 
\begin{equation}
\label{IntegralKernelDef}
(\mathsf Ef)(t,x)= \int_{\mathbb R} \dd t'\,( E(t-t')\, f(t'))(x),
\end{equation}
where $E(t), t \in \mathbb R,$ is a bounded operator on $L^2(S, \dd \vol_h)$. Furthermore,  we assume that  $E \in C^ 2(\mathbb R, \mathcal B(L^2(S, \dd \vol_h)).$       

$\mathsf E$ is a causal propagator of the Klein-Gordon operator $\partial_t^2 + A$ if it holds that
\begin{enumerate}[(i)]
\item 
\beq 
(\partial_t^2 + A) \mathsf E f =\mathsf  E (\partial_t^2 + A) f=0 , \qquad f \in C^2_0(\mathbb R, D(A)),
\ene
\item 
\beq
\text{\rm supp} \, \mathsf  (E f)_1 \subset  J ( {\rm supp}\,  f_1),  \qquad f \in C^2_0(\mathbb R, D(A)),
\ene
where ${\rm supp}\, g$ the support of $g.$

\item The integral kernel of $\mathsf E$ further satisfies,
\beq
E(0)=0, \qquad \partial_t E(0)= -I.
\ene
\end{enumerate}
\end{defn}

\begin{thm}
\label{Thm:CausalPropagator}
The integral operator $\mathsf{E}: C^2(\mathbb{R}, D(A)) \to C^2(\mathbb{R}, D(A))$ defined by the integral kernel
\begin{align}
E(t-t') & := - \frac{\sin (A^{1/2}(t-t'))}{A^{1/2}}  
\label{CausalPropagator}
\end{align}
is the causal propagator of $\partial_t^2 + A$.
\end{thm}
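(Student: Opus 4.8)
The plan is to verify, one at a time, the three defining properties (i)–(iii) of a causal propagator in Definition~\ref{Def:CausalPropagator}, using that, by Theorem~\ref{Thm:Kay}, $A = \overline{A_h}$ is positive, self-adjoint and invertible. Positivity together with invertibility places the spectrum of $A$ in an interval $[c,\infty)$ with $c>0$, so by the Borel functional calculus $A^{1/2}$ is self-adjoint with spectrum in $[\sqrt{c},\infty)$ and $A^{-1/2}$ is a bounded operator. Since $|\sin|\le 1$, the kernel $E(t) = -A^{-1/2}\sin(tA^{1/2})$ is then a bounded operator with $\|E(t)\| \le \|A^{-1/2}\|$ uniformly in $t$, and the spectral-measure representation of $E(t)f$ shows that $t\mapsto E(t)f$ is of class $C^2$ for $f\in D(A)$ (indeed already for $f\in D(A^{1/2})$), which supplies the regularity demanded in the definition.

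Properties (iii) and (i) I would dispatch by direct functional calculus. Differentiating under the functional calculus gives $\partial_t E(t) = -\cos(tA^{1/2})$ and $\partial_t^2 E(t) = A^{1/2}\sin(tA^{1/2}) = -A\,E(t)$; evaluating at $t=0$ yields $E(0)=0$ and $\partial_t E(0) = -I$, which is property (iii). The identity $\partial_t^2 E(t) = -A\,E(t)$ is precisely $(\partial_t^2 + A)E(t)=0$ on $D(A)$, and inserting it into the integral-operator representation \eqref{IntegralKernelDef} gives $(\partial_t^2 + A)\mathsf{E}f = 0$; the mirror identity $\mathsf{E}(\partial_t^2 + A)f = 0$ follows by integrating by parts twice in $t'$ (the boundary terms vanish because $f\in C_0^2(\mathbb{R},D(A))$ is compactly supported in time), using $\partial_{t'}^2 E(t-t') = \partial_t^2 E(t-t')$ and the closedness of $A$ to move it through the $t'$-integral. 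This establishes property (i).

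The main obstacle is property (ii), the causal support property, because the functional-calculus kernel $-A^{-1/2}\sin(tA^{1/2})$ is manifestly nonlocal and one must show it nevertheless respects the light cones of $h$. I see two routes. The direct route is a finite-propagation-speed estimate: the operator $\partial_t^2 + A_h = \partial_t^2 - \Delta_h + (m^2 + \xi\R)$ has principal part fixed by the metric $h$, so a standard energy-flux argument over truncated cones shows that solutions of $(\partial_t^2 + A)u = 0$ propagate at speed at most $1$ relative to $h$; applied to $u(t,\cdot)=E(t-t')\varphi$ this confines the spatial support and, after the $t'$-convolution, yields $\supp(\mathsf{E}f)\subset J(\supp f)$. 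The more economical route is uniqueness: since $\partial_t^2 + A_h$ is, up to sign, the normally hyperbolic Klein-Gordon operator on the globally hyperbolic ultrastatic spacetime $(\mathbb{R}\times S,g)$, the Cauchy problem with the data in (iii) has a unique solution, while the classical theory of normally hyperbolic operators produces a solution operator enjoying the causal support property; as the $E$ of \eqref{CausalPropagator} satisfies (i) and (iii) by the preceding paragraphs, uniqueness forces it to coincide with that operator and (ii) follows. I expect the energy-estimate route to be the more self-contained one, the only technical care being the justification that the abstract functional-calculus solution agrees with the classical (distributional) solution of the Cauchy problem to which the finite-propagation-speed argument directly applies.
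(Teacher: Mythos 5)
Your proposal is correct and follows essentially the same route as the paper: properties (i) and (iii) are verified by the Borel functional calculus of the self-adjoint operator $A$, and property (ii) is reduced to the causal support properties of solutions of the Klein--Gordon equation on a globally hyperbolic spacetime (the paper simply cites Lichnerowicz for this, which is your ``uniqueness route''). Your additional details --- the boundedness of $E(t)$ from the spectral gap and the sketch of the energy-estimate alternative for finite propagation speed --- are sound elaborations of what the paper leaves implicit.
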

\begin{proof}
Items (i) and (iii) of def. \ref{Def:CausalPropagator} hold by the functional calculus of self-adjoint operators. Item (ii) holds by the support properties of the solutions of the Klein-Gordon equation in globally hyperbolic spacetimes \cite{Lichne}.
\end{proof}

Using theorem \eqref{Thm:CausalPropagator}, we can obtain explicitly the distributional kernel of the Wightman two-point function in terms of initial data in ultrastatic spacetimes. Further, we can characterise time-translation invariant states.

\begin{thm}
\label{Thm:InvariantWightman}
Let $G^+(t,t') = \omega(\Phi(t) \Phi(t'))$ be the distributional kernel of the Wightman two-point function of the Klein-Gordon field in the algebraic state $\omega$ in an ultrastatic, globally hyperbolic spacetime, $(M = \mathbb{R} \times S, g)$. Let
\begin{subequations}
\label{WightmanInitialData}
\begin{align}
G^+(0,0) = \omega(\varphi \varphi') = G^+_{\varphi \varphi}, & \quad & \partial_t G^+(0,0) = \omega(\pi \varphi') = G^+_{\pi \varphi}, \\
\partial_{t'}G^+(0,0) = \omega(\varphi \pi') = G^+_{\varphi \pi}, & \quad & \partial_t \partial_{t'} G^+(0,0) = \omega(\pi \pi') = G^+_{\pi \pi},
\end{align}
\end{subequations}
be the initial data of $G^+$ in terms of the correlation functions of the 3-field, $\varphi$, and its momentum, $\pi$, on the initial value surface $S$, defined at $t = 0$. Then
\begin{align}
G^+(t,t') & =  \cos(A^{1/2} t) \cos(A^{1/2} t') G^+_{\varphi \varphi} +  \frac{\cos(A^{1/2} t) \sin (A^{1/2} t')}{A^{1/2}} G_{\varphi \pi} \nonumber \\
& + \frac{\sin(A^{1/2} t) \cos( A^{1/2} t')}{A^{1/2}} G_{\pi \varphi} + \frac{\sin( A^{1/2} t) \sin (A^{1/2} t')}{A} G_{\pi \pi}.
\label{UltrastaticKGStates}
\end{align}

Moreover, time-translation invariant states satisfy the initial data constraints
\begin{subequations}
\label{StatInitialData}
\begin{align}
G^+_{\pi \pi} & = A G^+_{\varphi \varphi}, \label{StatInitialData1} \\
G^+_{\varphi \pi} & = \frac{\ii}{2} \1 = - G_{\pi \varphi}, \label{StatInitialData2}
\end{align}
\end{subequations}
whereby one has that time-translation invariant states take the form
\begin{align}
G^+(t,t') & = \cos(A^{1/2} (t-t')) G^+_{\varphi \varphi}  -  \ii \frac{\sin (A^{1/2} (t-t')) }{2A^{1/2}} =: \mathcal{G}^+(t-t'). \label{UltrastatInvWightman}
\end{align}
\end{thm}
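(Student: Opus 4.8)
The plan is to obtain \eqref{UltrastaticKGStates} by evolving the Wightman kernel off the initial surface separately in each time argument, and then to read off the stationarity constraints by separating the resulting expression into its $(t-t')$- and $(t+t')$-dependent pieces.

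First I would note that, because $G^+$ solves \eqref{KG3} in each argument, for fixed $t'$ the map $t \mapsto G^+(t,t')$ is a solution of $(\partial_t^2 + A)u = 0$. By the functional calculus of the positive, invertible self-adjoint operator $A = \overline{A_h}$ (these properties being guaranteed by \eqref{PositiveAh} via Theorem \ref{Thm:Kay}), such a solution is fixed by its Cauchy data through $u(t) = \cos(A^{1/2}t)\,u(0) + A^{-1/2}\sin(A^{1/2}t)\,\partial_t u(0)$; equivalently, the evolution operators are $-\partial_t E(t)$ and $-E(t)$ with $E$ the causal propagator of Theorem \ref{Thm:CausalPropagator}, cf. \eqref{CausalPropagator}. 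Applying this in $t$, and then applying it once more in $t'$ to each of $G^+(0,t')$ and $\partial_t G^+(0,t')$, and relabelling the four coincidence data as in \eqref{WightmanInitialData}, yields \eqref{UltrastaticKGStates}. The only care needed is that these identities are read in the sense of bounded-operator-valued functions on $L^2(S,\dd\vol_h)$, which is exactly the framework of Definition \ref{Def:CausalPropagator}.

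To extract the constraints I would rewrite every product of trigonometric functions of $A^{1/2}t$ and $A^{1/2}t'$ in \eqref{UltrastaticKGStates} using the product-to-sum identities, so that $G^+(t,t')$ becomes a sum of terms depending on $A^{1/2}(t-t')$ and terms depending on $A^{1/2}(t+t')$. Time-translation invariance is the requirement $(\partial_t + \partial_{t'})G^+ = 0$, i.e. that all $(t+t')$-dependent contributions vanish. Collecting the coefficient of $\cos(A^{1/2}(t+t'))$ gives $\tfrac12\big(G^+_{\varphi\varphi} - A^{-1}G^+_{\pi\pi}\big)$, and collecting that of $\sin(A^{1/2}(t+t'))$ gives $\tfrac12 A^{-1/2}\big(G^+_{\varphi\pi} + G^+_{\pi\varphi}\big)$. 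Setting these to zero yields $G^+_{\pi\pi} = A\,G^+_{\varphi\varphi}$, which is \eqref{StatInitialData1}, together with $G^+_{\varphi\pi} = -G^+_{\pi\varphi}$. Combining the latter with the canonical commutation relation $G^+_{\varphi\pi} - G^+_{\pi\varphi} = \ii\,\1$ pins down $G^+_{\varphi\pi} = \tfrac{\ii}{2}\1 = -G^+_{\pi\varphi}$, which is \eqref{StatInitialData2}. Substituting \eqref{StatInitialData} back into \eqref{UltrastaticKGStates} and applying the same identities to the surviving $(t-t')$-terms collapses the expression to \eqref{UltrastatInvWightman}, which establishes the converse as well.

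The step I expect to be the main obstacle is the inference that the coefficients of the $(t+t')$-oscillatory terms must \emph{separately} vanish. This is where invertibility of $A$ enters decisively: reading the argument spectrally, for each spectral value $\lambda$ of $A$ the functions $s \mapsto \cos(\sqrt{\lambda}\,s)$ and $s \mapsto \sin(\sqrt{\lambda}\,s)$ are genuinely non-constant precisely because $\lambda$ is bounded away from $0$, so stationarity forces the corresponding spectral coefficients, and hence the operators multiplying them, to vanish. Were $A$ merely non-negative, a zero mode would render $\cos(A^{1/2}(t+t'))$ constant on it and the argument would break down; this is exactly the content that Theorem \ref{Thm:Kay} supplies under assumption \eqref{PositiveAh}. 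Beyond this, one must keep the manipulations rigorous at the level of operator-valued functions of the time arguments rather than of distributions carrying coincident-point singularities, but these checks are routine within the setup of Definition \ref{Def:CausalPropagator} and Theorem \ref{Thm:CausalPropagator}.
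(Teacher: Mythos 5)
Your proposal is correct and follows essentially the same route as the paper: the paper likewise obtains \eqref{UltrastaticKGStates} from the causal propagator \eqref{CausalPropagator} (citing eq.\ (3.19) of \cite{Juarez-Aubry:2019jon} rather than re-deriving the single-argument evolution), rewrites it via product-to-sum identities as in \eqref{UltrastatAuxWightman}, and imposes that only $(t-t')$-dependence survive together with the canonical commutation relations to get \eqref{StatInitialData} and hence \eqref{UltrastatInvWightman}. Your additional spectral remark on why the $(t+t')$-coefficients must vanish separately, using the strict positivity of $A$, is a correct refinement of a step the paper leaves implicit.
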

\begin{proof}
Using eq. (3.19) in \cite{Juarez-Aubry:2019jon} and formula \eqref{CausalPropagator} for the causal propagator, one obtains eq. \eqref{UltrastaticKGStates}. Further, using functional calculus, we can express eq. \eqref{UltrastaticKGStates} as
\begin{align}
G^+(t,t') & = \frac{1}{2} \left(\cos(A^{1/2} (t-t')) + \cos(A^{1/2} (t+t'))  \right) G^+_{\varphi \varphi}  + \frac{-\sin (A^{1/2} (t-t')) + \sin (A^{1/2} (t+t'))}{2 A^{1/2}} G_{\varphi \pi} \nonumber \\
& + \frac{\sin (A^{1/2} (t-t')) + \sin (A^{1/2} (t+t'))}{2 A^{1/2}} G_{\pi \varphi} + \frac{\cos (A^{1/2} (t-t')) - \cos (A^{1/2} (t+t')) }{2 A} G_{\pi \pi}.
\label{UltrastatAuxWightman}
\end{align}

Time translation-invariant states have time dependence only appearing as the time difference $t-t'$. Imposing this on eq. \eqref{UltrastatAuxWightman}, together with the canonical commutation relations, $G^+_{\varphi \pi} - G^+_{\pi \varphi} = \ii \1$, yields eq. \eqref{StatInitialData}. (The second relation \eqref{StatInitialData2} satisfies the canonical commutation relations.)

In turn, if the initial data \eqref{WightmanInitialData} satisfies the relations \eqref{StatInitialData}, then eq. \eqref{UltrastatAuxWightman} takes the form of 
eq. \eqref{UltrastatInvWightman}.
\end{proof}

\begin{rem}
Note that, as desired, we have the integral kernel identity $\Im \mathcal{G}^+(t-t') = \frac{1}{2} E(t-t')$ in eq. \eqref{UltrastatInvWightman}.
\end{rem}

If constraints \eqref{Constr-IVsurface} hold at $t = 0$ and the Wightman two-point function of the Klein-Gordon field is time-translation invariant, i.e. of the form of eq. \eqref{UltrastatInvWightman}, and the state is Hadamard, then the constraints \eqref{Constr-IVsurface} are satisfied at all times, since the limit $t'\to t$ is time independent. Thus, time-translation invariance is a necessary and sufficient condition for conditions \eqref{Constr-Conservation} to hold. 

\subsection{Examples of time-translation invariant Hadamard states in ultrastatic spacetimes}

We now briefly give examples -- some quite well-known and important -- of time-translation invariant Hadamard states, and characterise them in terms of their initial data. Our purpose is merely to emphasise that this class of states contains well-known important examples

\subsubsection{The ground state}

The ground state is defined by the integral kernel ${}^\Omega \mathcal{G}^{+}(t-t') = \cfrac{\ee^{-\ii A^{1/2}(t-t')}}{2 A^{1/2}}$, and can be obtained from the initial data ${}^\Omega G^+_{\varphi \varphi} = \cfrac{1}{2 A^{1/2}}$, together with eq. \eqref{StatInitialData}. It is Hadamard by passivity \cite{Sahlmann:2000fh}. More explicitly, initial data can be expressed in terms of the complete set of eigenfunctions $\psi_j$ of $A$, such that $A \psi_j = \omega_j^2 \psi_j$, with the aid of the spectral theorem,
\begin{align}
{}^\Omega G^+_{\varphi \varphi}(x,x')  = \lim_{\epsilon \to 0^+} \int_{\sigma(A)} \dd \mu(j) \frac{\ee^{-\omega_j \epsilon}}{2 \omega_j} \psi_j(x) \overline{\psi_j(x')}.
\label{VaccumTwoPtSpectral}
\end{align} 

Note that if the spacetime is smooth, then the eigenfunctions $\psi_j$ are smooth by elliptic regularity, see e.g. \cite[p. 22]{Fulling:1989nb}.

\subsubsection{KMS states at positive temperature}

KMS states at positive temperature $1/\beta$ are defined by the integral kernel  
\begin{equation}
{}^{\beta} \mathcal{G}^{+}(t-t') = \cfrac{1}{2 A^{1/2} (1-\ee^{-\beta A^{1/2}})} \left( \ee^{-\ii A^{1/2}(t-t')} + \ee^{-\beta A^{1/2} + \ii A^{1/2}(t-t')}\right),
\end{equation} 
which can be obtained from the initial data ${}^\beta G^{+}_{\varphi \varphi} = \cfrac{1 + \ee^{-\beta A^{1/2}}}{2 A^{1/2} (1-\ee^{-\beta A^{1/2}})} = \cfrac{\coth\left(\frac{1}{2}\beta A^{1/2} \right)}{2 A^{1/2}}$, together with eq. \eqref{StatInitialData}. KMS states are Hadamard by passivity \cite{Sahlmann:2000fh}.

\subsubsection{``Smooth deviations" from vacuum and KMS states}

Consider the states defined by intial data
\begin{align}
{}^f G^+_{\varphi \varphi}(x,x')  = \lim_{\epsilon \to 0^+} \int_{\sigma(A)} \dd \mu(j) \left( \frac{\ee^{-\omega_j \epsilon}}{2 \omega_j} + f(\omega_j) \right) \psi_j(x) \overline{\psi_j(x')}.
\end{align} 
together with eq. \eqref{StatInitialData}, such that $f$ is bounded and $f \psi_j(x) \psi_j(x')$ decays rapidly (faster than any polynomial) in $\sigma(A)$. Then the integral
\begin{align}
\int_{\sigma(A)} \dd \mu(j) f(\omega_j) \psi_j(x) \overline{\psi_j(x')}
\end{align}
exists, and also arbitrarily many derivatives thereof, defining a smooth bi-function. The two-point function ${}^f G^+_{\varphi \varphi}(x,x')$ hence differs from the vacuum two-point function by a smooth piece, and has therefore Hadamard singular structure.

One can also form convex combinations of the above two-point functions of the form ${}^\alpha \mathcal{G}^{+}(t-t')= \alpha [{}^f \mathcal{G}^{+}(t-t')] + (1-\alpha) [{}^\beta \mathcal{G}^{+}(t-t')]$, where $\alpha \in [0,1]$. These are defined by the initial data induced from the data of the ground, KMS states and its ``smooth deviations". They are Hadamard by construction.

\subsection{Absence of coherent-state solutions of positive energy}
\label{subsec:Coherent}

Coherent states constitute another class of important states. They are quasi-free with non-vanishing one-point function. We show here that there are no non-trivial coherent states as solutions that are time-translation invariant. The reason is that the product of two dynamical, classical solutions will generally fail to be stationary.

Let us define the coherent state $\omega_\phi$ as ``peaked" around a classical solutions of the form 
\begin{equation}\label{ClassSol}
\phi(t) = \cos(t A^{1/2}) f + \frac{\sin (t A^{1/2})}{A^{1/2}} p.
\end{equation}
with initial data 
\begin{align}\label{ClassInitialData}
\left\{
                \begin{array}{l}
                  \phi(0,x) = f(x), \\
                  \partial_t \phi(0,x)  = p(x),
                \end{array}
\right.
\end{align}
with $f \in D(A)$ and $p \in D(A^{1/2})$.

We define $\omega_\phi$ in terms of its one-point function $\omega_\phi(\Phi(t)) = -\phi(t)$ and its two-point function $\omega_\phi(\Phi(t) \Phi(t')) =  {}^\phi G^{+}(t,t') = {}^\Omega \mathcal{G}^{+}(t-t') + \phi(t) \phi(t')$. The state will be time-translation invariant if the product $\phi(t) \phi(t')$ is. We have that
\begin{align}
\phi(t) \phi(t') & = \frac{1}{2} \left[\cos(t-t') + \cos(t+t') \right]f(t) f(t') + \frac{1}{2 A^{1/2}} \left[ \sin(t-t') + \sin(t+t') \right] f(t) p(t') \nonumber \\
& + \frac{1}{2 A^{1/2}} \left[\sin(t-t') + \sin(t+t') \right] p(t) f(t') + \frac{1}{2 A} \left[\cos(t-t') - \cos(t+t') \right] p(t) p(t').
\end{align}

Time-translation invariance requires that the following relations hold
\begin{align}
A^{1/2} f - p = 0, \quad \quad \quad A^{1/2} f + p = 0.
\end{align}
which imply that $A^{1/2} f = p = 0$. Inserting this into eq. \eqref{ClassSol}, we obtain that the time-translation invariant solution are the trivial solutions
\begin{equation}
\phi(t) = f,
\label{phif}
\end{equation}
where $f$ satisfies $A^{1/2} f = 0$. Looking at the form of the solutions \eqref{ClassSol}, in fact, it suffices that $f$ solve the elliptic equation $A f = 0$ and is time independent. But since our hypotheses require that $A$ be positive, $f = 0$. Thus, there are no coherent-state solutions of positive energy.



\subsection{Remarks on states as solutions in ultrastatic semiclassical gravity}

We have seen thus far that the states that solve semiclassical gravity in ultrastatic spacetimes must be time-translation invariant, and of the form of eq. \eqref{UltrastatInvWightman}. We must emphasise, however, that not every state of this form is a solution of semiclassical gravity in ultrastatic spacetimes. In particular, it is required that the constraints \eqref{Constr-IVsurface} hold on the initial surface.

Whether such constraints may hold depends on the details of the problem, since the constraints depend strongly on the geometry coefficients $\Lambda$, $G_{\rm N}$, $\alpha$ and $\beta$, the  as well as on the field parameters $m^2$ and $\xi$. Even the existence of a vacuum state for the solutions is not guaranteed in general.

If the Cauchy surface $S$ is non-compact, then the singularities of the integrand can be isolated and subtracted by power-counting in the spectral parameter. The coincidence limit can be then examined for the regular part, and inserted in eq. \eqref{ConstraintTrAnom} as a first criterion. Afterwards, constraint \eqref{Constraint3-DF} can be analysed. 

If the Cauchy surface $S$ is closed (compact without boundary), then large eigenvalue estimates -- Weyl's law for the operator $A$ -- can best isolate the singular structure. We should emphasise, however, that this task can be very technical in its own right. In situations with large amount of symmetries, other symmetry-adapted arguments may suffice, such as reading off expansion coefficients. We study below some relatively simple examples in sec. \ref{sec:Examples}.

We should add a word on the r\^ole of normal ordering in the ultrastatic context. Distinguished annihilation and creation operators exist since the ``positive energy" projector in the one-particle structure of the quantum field theory is distinguished by the time-translation symmetry. (Equivalently, a distinguished complex structure can be selected for complexified classical solutions.) Choosing the $w_\ell^0$ coefficient for $H_\ell$, cf. eq. \eqref{Hadamard}, in such a way that it coincides with the $w_\ell$ coefficient of the vacuum state modulo terms $O(\sigma^{3/2})$ is tantamount to subtracting the vacuum two-point function from the state in order to remove the singular structure of the two-point function. This procedure is equivalent to a normal ordering prescription. 

In the context of normal ordering, $[w_\ell]$ and $[w_{\ell;ij}]$ are traded to $[w_\ell- w_{\ell, {\rm vac}}]$ and $[w_{\ell;ij}- w_{\ell, {\rm vac};ij}]$ in eq. \eqref{Constraint3-DF} and \eqref{ConstraintTrAnom}. If one takes the normal ordering prescription viewpoint, the existence of the vacuum state is a purely geometric constraint: spacetimes that admit vacuum states are vacuum (matterless) solutions for higher-order gravity, with the higher order terms stemming from $[v_1]$, $I_{ab}$ and $J_{ab}$.


\section{Semiclassical gravity in static spacetimes}
\label{sec:Stat}

We now come to the task of studying semiclassical gravity in uniformly static, globally hyperbolic spacetimes. It is to our advantage that static spacetimes are conformally related to ultrastatic ones by a time-independent conformal factor. Namely, if $\tilde{g}_{ab}$ is static, it is true that
\begin{align}
\tilde g_{ab}&  = \Theta^2 g_{ab}, \hspace{1cm} \Theta := \ee^\theta > 0, \hspace{1cm} g_{ab} = - \dd t_a \otimes \dd t_b + h_{ij} \dd x^i_a \otimes \dd x^j_b,
\end{align}
where $0 < c_1\leq \Theta< c_2$ is time-independent, with $c_1$ and $c_2$ real constants.

Using the conformal transformation laws for the curvature tensors (see e.g. App. D of \cite{WaldBook}) we have that,
\begin{subequations}
\label{ConformalCurvature}
\begin{align}
\tilde R_{abc}{}^d & = R_{abc}{}^d + 2 \delta^d{}_{[a} \nabla_{b]} \nabla_c \theta - 2 g^{de} g_{c[a}\nabla_{b]}\nabla_e \theta + 2(\nabla_{[a} \theta) \delta^d{}_{b]} \nabla_c \theta - 2 (\nabla_{[a} \theta) g_{b]c} g^{df} \nabla_f \theta - 2 g_{c[a} \delta^d{}_{b]} g^{ef} (\nabla_e \theta ) \nabla_f \theta  ,\\
\tilde R_{ac} & = R_{ac}  - 2 \nabla_a \nabla_c \theta - g_{ac} g^{de} \nabla_d \nabla_e \theta + 2 (\nabla_a \theta) \nabla_c \theta - 4 g_{ac} g^{de} (\nabla_d \theta) \nabla_e \theta,\\
\tilde R & = \Theta^{-2} \left[R - 6 g^{ac}\nabla_a \nabla_c \theta - 6 g^{ac} (\nabla_a \theta) \nabla_c \theta \right],
\end{align}
\end{subequations}
where covariant derivatives are with respect to the 3-metric $h$, and tildes denote the curvature tensors with respect to the static metric $\tilde{g}$. As is well known, the field theory of ultrastatic spacetimes can also be exported to the static case with the aid of the so-called \emph{optical metric}, which is precisely the (globally hyperbolic) ultrastatic metric $g$. The Klein-Gordon equation in static spacetimes,
\begin{align}
\left( \tilde \Box - m^2 - \xi \tilde R \right) \tilde \Phi = 0,
\label{KG-Static}
\end{align} 
can be cast in the form (see e.g. Chap. 6 of \cite{Fulling:1989nb})
\begin{align}
 \left[ \partial_t^2 - \Delta_h + \Theta^2 \left( m^2 + \left( \xi - \frac{1}{6}\right) \tilde R\right) + \frac{1}{6} R \right] \Phi = 0, \hspace{1cm} & \Phi := \Theta \tilde \Phi.
 \label{KG-Conformal}
\end{align}

Using conformal techniques, we can construct the time-translation invariant states in the static case. Assume that
\begin{align}
\tilde{A}_h := - \Delta_h + \Theta^2 \left( m^2 + \left( \xi - \frac{1}{6}\right) \tilde R\right) + \frac{1}{6} R > 0 \quad \text{ in } \quad L^2(S, \dd \vol_h),
\label{Atilde}
\end{align}
then by theorem \eqref{Thm:Kay} $\tilde A := \overline{\tilde{A}_h}$ is self-adjoint in $L^2(S, \dd \vol_h)$ and ``auxiliary" Wightman two-point functions for $\Phi$ can be constructed in terms of initial data, which will be a bi-solutions to eq. \eqref{KG-Conformal}. In terms of these auxiliary Wightman function, it follows that
\begin{align}
\tilde G^+(\x, \x') = \Theta^{-1}(\x) G^+(\x, \x') \Theta^{-1}(\x')
\label{ConformalWightman}
\end{align}
is a bi-solution to eq. \eqref{KG-Static} and \eqref{KG-Conformal}, and Wightman function for the Klein-Gordon field $\tilde \Phi$.

Using theorem \ref{Thm:InvariantWightman}, time-translation invariant states in static spacetimes take the form
\begin{subequations}
\label{StaticWightman}
\begin{align}
\tilde G^+(t, t') & =  \Theta^{-1} \cos(\tilde A^{1/2}(t-t')) G^+_{\varphi \varphi} (\Theta')^{-1}- \ii \Theta^{-1} \frac{\sin(\tilde{A}^{1/2}(t-t'))}{2 \tilde{A}^{1/2}} (\Theta')^{-1} =: \tilde{\mathscr{G}}(t-t'), \\
\tilde G^+_{\varphi \varphi} & := \Theta^{-1} G^+_{\varphi \varphi} (\Theta')^{-1}.
\end{align}
\end{subequations}

\begin{prop}
Let $G^+$ be a Hadamard Wightman function in $(M, g_{ab})$ and a bi-solution to eq. \eqref{KG-Conformal}, then the Wightman function defined in eq. \eqref{ConformalWightman} as a bi-solution to eq. \eqref{KG-Static} is Hadamard in $(M, \tilde{g}_{ab})$.
\end{prop}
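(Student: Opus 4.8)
The plan is to establish the proposition through the microlocal characterisation of the Hadamard property rather than by directly manipulating the geometric expansion \eqref{HadamardCondition}, since the wavefront set route exploits the conformal invariance of the causal structure and entirely sidesteps the conformal deformation of geodesics. Concretely, I would invoke Radzikowski's theorem, which asserts that for a two-point function that is a bi-solution of the field equation, of positive type, and whose antisymmetric part equals $\tfrac{\ii}{2}$ times the causal propagator, the local Hadamard form is equivalent to the microlocal spectrum condition $\mathrm{WF}(G^+)\subset\mathcal{N}_g$, where $\mathcal{N}_g$ is the Hadamard set of covector pairs supported on null geodesics of $g$ with future-directed first entry. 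The operator in \eqref{KG-Conformal} and the operator in \eqref{KG-Static} are both normally hyperbolic, with principal symbols fixed by $g$ and $\tilde g$ respectively, so the theorem applies verbatim to each problem despite the position-dependent potential $\Theta^2\big(m^2+(\xi-\tfrac{1}{6})\tilde R\big)$ appearing in \eqref{KG-Conformal}.

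The two geometric inputs I would then record are as follows. First, since $\Theta=\ee^\theta$ is smooth and nowhere vanishing (indeed $0<c_1\le\Theta<c_2$), multiplication of a distribution by the smooth functions $\Theta^{-1}(\x)$ and $\Theta^{-1}(\x')$ leaves its wavefront set unchanged, so that the defining relation \eqref{ConformalWightman} gives $\mathrm{WF}(\tilde G^+)=\mathrm{WF}(G^+)$. Second, because $\tilde g_{ab}=\Theta^2 g_{ab}$ with $\Theta>0$, the null geodesics and the future/past splitting of null covectors of $\tilde g$ coincide with those of $g$, whence $\mathcal{N}_{\tilde g}=\mathcal{N}_g$. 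Combining these, the hypothesis that $G^+$ is Hadamard in $(M,g_{ab})$ (so $\mathrm{WF}(G^+)\subset\mathcal{N}_g$) yields immediately $\mathrm{WF}(\tilde G^+)\subset\mathcal{N}_{\tilde g}$.

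It then remains to verify that $\tilde G^+$ qualifies as the two-point function of a state for $(M,\tilde g_{ab})$, so that Radzikowski's equivalence may legitimately be applied in the static spacetime. That $\tilde G^+$ is a bi-solution of \eqref{KG-Static} is exactly the content of the construction around \eqref{ConformalWightman}. Positivity is inherited because $\Theta^{-1}$ is real and positive: for a test function $\tilde f$ one has $\tilde G^+(\overline{\tilde f},\tilde f)=G^+(\overline{\Theta^{-1}\tilde f},\Theta^{-1}\tilde f)\ge 0$. For the antisymmetric part I would use the conformal relation $\tilde E=\Theta^{-1}E\,\Theta^{-1}$ between the causal propagators, which follows from the conformal covariance of the operators together with the matching of the initial conditions in def.~\ref{Def:CausalPropagator}; then $\tilde G^+-(\tilde G^+)^{\ast}=\Theta^{-1}\big(G^+-(G^+)^{\ast}\big)\Theta^{-1}=\ii\,\Theta^{-1}E\,\Theta^{-1}=\ii\tilde E$. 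With all three state conditions in hand and the wavefront bound above, Radzikowski's theorem for $(M,\tilde g_{ab})$ delivers the local Hadamard form for $\tilde G^+$, completing the argument.

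The step I expect to be the main obstacle is the justification that Radzikowski's equivalence is genuinely transported across the conformal map, specifically confirming that $\tilde E=\Theta^{-1}E\,\Theta^{-1}$ is the true causal propagator of the static operator (so that the antisymmetric part is correctly normalised) and that the theorem's hypotheses are met for the effective potential in \eqref{KG-Conformal}; this is precisely where the mild assumptions on the conformal factor are used. A purely computational alternative---expanding the Hadamard parametrix $\tilde H_\ell$ for $\tilde g$ and showing $\tilde G^+-\tilde H_\ell$ is smooth---would instead require tracking the conformal transformation of $\sigma$, of $\Delta^{1/2}$, and in particular of the $v$-coefficient, whose geodesic deformation is exactly the difficulty that the microlocal argument avoids.
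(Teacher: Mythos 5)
Your proposal is correct and follows essentially the same route as the paper: the paper's proof likewise observes that multiplication by the smooth, strictly positive $\Theta^{-1}$ leaves the wavefront set unchanged, verifies by direct computation that $\tilde{E}=\Theta^{-1}E\,\Theta^{-1}$ is the causal propagator of the static operator with unchanged support properties, and then concludes via the equivalence of the local Hadamard form and the wavefront-set condition in Radzikowski's theorem. Your additional remarks on positivity and on the conformal invariance of the set $\mathcal{N}_g$ are correct refinements of what the paper leaves implicit.
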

\begin{proof}
Since $\Theta$ is smooth and strictly positive, as bidistributions ${\rm WF}(G^+) = {\rm WF} (\tilde{G}^+)$. Moreover, the antisymmetric part of eq. \eqref{ConformalWightman} defines the causal propagator of $\tilde \Box - m^2 - \xi \tilde{R}$ via the kernel
\begin{align}
\tilde{E}(\x, \x') := \Theta^{-1}(\x) E(\x, \x') \Theta^{-1}(\x')
\end{align}
where ${\sf E}$ is the causal propagator of $(\partial_t^2 - \tilde{A}_h)$ with kernel $E$. To see this, note that by a direct calculation
\begin{align}
\left(\tilde \Box - m^2 - \xi \tilde{R}\right) \int_M \dd \vol(\x') \tilde{E}(\x, \x') f(\x') & = \Theta^{-3}(\x) (\partial_t^2 - \tilde{A}_h) \Theta(\x) \int_M \dd \vol(\x') \tilde{E}(\x, \x') f(\x')  \nonumber \\
& = \Theta^{-3} (\partial_t^2 - \tilde{A}_h) {\sf E} (\Theta^3 f) =  (\partial_t^2 - \tilde{A}_h) {\sf E} f = 0,
\end{align}
and that
\begin{align}
 \int_M \dd \vol(\x') \tilde{E}(\x, \x') \left(\tilde \Box' - m^2 - \xi \tilde{R}(\x')\right) f(\x') & =  \Theta^{-1}(\x)\int_M \dd \vol(\x') E(\x, \x') \Theta^{-4}(\x') (\partial_t^2 - \tilde{A}_h') \Theta(\x') f(\x') \nonumber \\
 & = \Theta^{-1} {\sf E} (\partial_t^2 - \tilde{A}_h) (\Theta f) = {\sf E} (\partial_t^2 - \tilde{A}_h) f = 0.
\end{align}

(Analogous calculations can be used to prove similar relations for the advanced and retarded propagators.) Furthermore, the distributional support properties are unchanged under the conformal rescaling, i.e., $\supp \tilde{E} = \supp E$.

Using Radzikowski's theorem \cite[Theorem 5.1]{Radzikowski:1996pa}, in particular the equivalence of items 1 and 3 in that theorem, the result follows.

\end{proof}

It follows that the Hadamard singular structure can be read off immediately from the above expression
\begin{align}
\tilde{H}_\ell(\x, \x') = \Theta^{-1}(\x) H_\ell(\x, \x') \Theta^{-1}(\x'),
\label{StaticHadamard}
\end{align}
where $H_\ell$ is the Hadamard singular structure for the auxiliary ultrastatic problem \eqref{KG-Conformal}.




The remaining element to obtain the initial-data constraints stemming from the semiclassical gravity in the static case explicitly is, in view of eq. \eqref{TabPointSplit2} and \eqref{TabSummary}, to obtain the pararallel-transport propagator, $\tilde{g}^\mu{}_{\mu'}$. The spacetime tetrads of $\tilde{g}_{ab}$ can be obtained by conformally rescaling those of $g_{ab}$. Indeed, $\tilde{e}^I_\mu = \Theta e_\mu^I$ and $\tilde{e}^\mu_I = \Theta^{-1} e_I^\mu$, from where it follows that
\begin{align}
\tilde{g}^{\mu}{}_{\mu'}(p,p') = \Theta^{-1}(p) g^{\mu}{}_{\mu'}(p,p') \Theta(p').
\label{StatParallelProp}
\end{align}

From eq. \eqref{StaticWightman}, \eqref{StaticHadamard}, \eqref{StatParallelProp} and \eqref{ConformalCurvature} we can obtain the expectation value of the renormalised stress-energy tensor in static situations in terms of the geometry of the 3-manifold geometry $(S, h_{ab})$ and the conformal factor $\Theta$, and write down an explicit expression for the constrained initial value problem \eqref{semiEFEKG2}. We have that initial data must satisfy the following constraints,


\begin{subequations}
\label{ConstrStat}
\begin{align}
\tilde{\mathcal{C}}_{tt} & = \frac{1}{8 \pi^2}\left[-[\tilde \nabla_t^2 \tilde{w}_{\ell}] - \frac{1}{2}\left(2 \xi - \frac{1}{2} \right)  h^{ij} \tilde{\nabla}_i \tilde{\nabla}_j [\tilde{w}_\ell] + \xi \tilde{R}_{tt} [\tilde{w}_\ell] + \Theta^2  [\tilde{v}_1]   \right] \nonumber \\
&  - \frac{1}{8 \pi G_{\rm N}} \left[\tilde{R}_{tt} + \Theta^2 \left( \frac{1}{2} \tilde{R} - \Lambda \right) - \alpha \tilde{I}_{tt} - \beta \tilde{J}_{tt} \right] = 0, \label{CttStat}\\
\tilde{\mathcal{C}}_{ti} & =  -\frac{1}{8 \pi^2} [\tilde{\nabla}_i \tilde{\nabla}_t\tilde{w}_{\ell}] + \frac{1}{8 \pi G_{\rm N}} \left[ \alpha \tilde{I}_{ti} + \beta \tilde{J}_{ti} \right] = 0  , \label{CtiStat} \\
\tilde{\mathcal{C}}_{ij} & = \frac{1}{8 \pi^2}\left[-[\tilde{\nabla}_i \tilde{\nabla}_j \tilde{w}_{\ell}] + \frac{1}{2}(1-2\xi) \tilde{\nabla}_j \tilde{\nabla}_i [\tilde{w}_\ell] + \frac{1}{2}\left(2 \xi - \frac{1}{2} \right) h_{ij}  h^{kl} \tilde{\nabla}_k \tilde{\nabla}_l [\tilde{w}_\ell] + \xi \tilde{R}_{ij} [\tilde{w}_\ell] - \Theta^2 h_{ij} [\tilde{v}_1]   \right] \nonumber \\
& - \frac{1}{8 \pi G_{\rm N}} \left[\tilde{R}_{ij} - \Theta^2 \left( \frac{1}{2} \tilde{R} - \Lambda \right)  h_{ij} - \alpha \tilde{I}_{ij} - \beta \tilde{J}_{ij} \right] = 0  \label{CijStat}.
\end{align}
\end{subequations}

All covariant derivatives in eq. \eqref{ConstrStat} are with respect to the metric $\tilde{g}$, as indicated by the tildes, but can be cast as covariant derivatives with respect to $g$ by the relation
\begin{align}
(\tilde{\nabla}_a - \nabla_a)v_b = \left[ g_{ab} \nabla^c \theta -  2 \delta^c{}_{(a}\nabla_{b)} \theta \right] v_c.
\end{align}

Let us now study constraints \eqref{ConstrStat} in more detail. First, by an argument analogous to the one in the ultrastatic case for the constraint $\mathcal{C}_{ti} = 0$, cf. Lemma \ref{LemmaCti}, we have that in eq. \eqref{CtiStat} the term $[w_{\ell;ti}] = 0$, thus, only the geometric terms of constraint \eqref{CtiStat} remain. Moreover, using eq. \eqref{ConformalCurvature}, we can see that the geometric terms vanish,
\begin{align}
8 \pi G_{\rm N} \tilde{\mathcal{C}}_{ti} & =   \alpha \left( - \tilde \Box \tilde{R}_{it} + 2 \tilde{R}_i{}^\mu \tilde{R}_{\mu t} \right) - 2 \beta  \tilde R \tilde{R}_{ti} = 2 \alpha \tilde{R}_i{}^t \tilde{R}_{tt}  = 2 \alpha \tilde{g}^{t \mu}\tilde{R}_{i\mu} \tilde{R}_{tt} = 2 \alpha \tilde{g}^{t j}\tilde{R}_{ij} \tilde{R}_{tt} = 0. \label{CtiStatVanish}
\end{align}

Thus, again in the static case, $\tilde{\mathcal{C}}_{ti} = 0$ is an identity and the stress-energy tensor is block-diagonal.

Further, taking the trace of eq. \eqref{CijStat} with respect to the 3-metric $h$ and using eq. (52) in \cite{Decanini:2005eg}, which yields the relation
\begin{align}
-[\tilde{\nabla}_t^2 \tilde{w}_{\ell}] =  - h^{ij} [\tilde{\nabla}_j \tilde{\nabla}_i\tilde{w}_{\ell}] + (m^2 + \xi \tilde R)\Theta^2 [\tilde{w}_\ell] -6 \Theta^2 [\tilde{v}_1],
\end{align}
one can recast constraint \eqref{CttStat} as an elliptic equation for $[\tilde{\omega}_\ell]$, as in the ultrastatic case,
\begin{align}
\tilde{\mathcal{C}}_{tt} & = \frac{1}{8 \pi^2}\left[\left( \frac{1}{2} - 3 \xi \right)  h^{ij} \tilde{\nabla}_i \tilde{\nabla}_j [\tilde{w}_\ell]   + m^2 \Theta^2 [\tilde{w}_\ell] -2 \Theta^2 [\tilde{v}_1]  \right]  - \frac{\Theta^2}{8 \pi G_{\rm N}} \left[ \tilde{R} -4 \Lambda + \alpha  \tilde{g}^{\mu \nu}\tilde{I}_{\mu \nu} + \beta \tilde{g}^{\mu \nu} \tilde{J}_{\mu \nu}  \right] = 0.
\label{Ctt-Stat-Trace}
\end{align}

Constraint \eqref{Ctt-Stat-Trace} is the trace of the semiclassical Einstein equations. In the conformally coupled case, eq. \eqref{Ctt-Stat-Trace} becomes state independent and the constraint becomes purely geometric in terms of the trace anomaly of the renormalised stress-energy tensor.


\section{Examples and no-go results}
\label{sec:Examples}

As we have mentioned before, solving the constraints explicitly in particular examples is an involved issue. In this section, we discuss some relatively simple examples and a no-go result for conformally-coupled fields. 

\subsection{Ultrastatic spacetimes with maximally-symmetric spatial geometry}
\label{Maximally-symmetric-ultrastatic}

Maximally symmetric Riemannian geometries in three dimensions are (i) Euclidean geometry, (ii) elliptic geometry and (iii) hyperbolic geometry. Case (i) is Minkowski spacetime, where it is well known that an appropriate choice of renormalisation parameters ensures that the renormalised stress-energy tensor in the Minkowski vacuum vanishes, in agreement with Wald's fourth stress-energy renormalisation axiom \cite{Wald:1978pj}, and thus solving the semiclassical gravity equations globally. Case (ii) has been studied by Sanders in great detail in \cite{Sanders:2020osl}. 

Case (iii) requires an involved analysis -- indeed analogous to that of case (ii) --, which requires a detailed treatement in its own right. Here we shall study some special cases for which can show the existence of solutions to semiclassical gravity, and comment briefly on the setup of the more general cases, which deserve attention in its own right elsewhere. 

The spatial section in case (iii) is the maximally symmetric Riemannian space $(S, h_{ab})$ with
\begin{align}
h_{ab} =  \rho^2 \dd r_a \otimes \dd r_b  +\rho^2 \sinh^2 r \dd \theta_a \otimes \dd \theta_b  + \rho^2  \sinh^2 r \sin^2 \theta \dd \phi_a \otimes \dd \phi_b,
\end{align}
where $r \in \mathbb{R}^+$, $\theta \in [0, \pi]$ and $\phi \in [0, 2\pi)$, and $\rho>0$ is the hyperbolic radius.
The curvature tensors and scalars of $(S, h_{ab})$ are given by
\begin{subequations}
\begin{align}
\R_{abcd}  = - \frac{1}{\rho^2} (h_{ac} h_{bd} - h_{ad} h_{bc} ), \hspace{50pt} \R_{ab}  = - \frac{2}{\rho^2} h_{ab},  \hspace{50pt} \R  = - \frac{6}{\rho^2}.
\end{align}
\end{subequations}

In the general case the relevant eigenvalue problem at hand is $A_h \Psi = \omega^2 \Psi$ with the operator
\begin{align}
A_h := - \Delta_h + m^2 - \frac{6 \xi}{\rho^2} = -\frac{1}{\rho^2} \left(\partial_r^2 + 2 \coth r \partial_r + \frac{1}{\sinh^2 r} \Delta_{\mathbb{S}^2} - \rho^2 m^2 + 6 \xi \right).
\end{align}

Under the assumption $m^2 - 6 \xi/\rho \geq 0$, we have that $A_h$ is positive, and we fall in the hypotheses of section \ref{sec:States}. The central point for the construction of the states is finding an explicit form of the eigenfunctions and the spectrum of (the self-adjoint extension of) $A_h$. The angular part is solved by spherical harmonics, and one is left with the radial equation 
\begin{align}
- \left(\partial_r^2 + 2 \coth r \partial_r + \frac{l(l+1)}{\sinh^2 r}  \right) \psi_{l \omega} = - \frac{1}{\sinh^2 r} \left( \partial_r \left( \sinh^2 r \partial_r \right) + l(l+1) \right) \psi_{l \omega} = \left( \rho^2 \omega^2 - \rho^2 m^2 + 6 \xi \right)\psi_{l \omega}.
\label{Radial}
\end{align}

Eq. \eqref{Radial} defines a singular Sturm-Liouville problem, and the strategy for constructing physical states amounts to obtaining the resolvent operator as an eigenfunction expansion of the Sturm-Liouville differential operator of the problem. (The differential operator in the middle of eq. \eqref{Radial} has explicit Sturm-Liouville form.) Eq. \eqref{Radial} admits two linearly independent solutions in terms of hypergeometric functions. The technical point is to appropriately normalise the radial eigenfunctions $\psi_{l \omega}$ as a linear combination of these linearly independent solutions by imposing a resolution of the identity as a delta-function eigenfunction expansion. Once this is achieved, the states for the theory can be expressed as eigenfunction expansions with eigenfunctions $\Psi_{\omega l m}(r, \theta, \phi) = \psi_{l \omega}(r) Y_{lm}(\theta, \phi)$, and one is left to verify the constraints \eqref{Constraint3-DF} and \eqref{ConstraintTrAnom}. A detailed analysis of this problem deserves attention in its own right.

In the special case in which $m^2 + \xi \R = 1$ the problem becomes simplified \cite{Sanders:2020osl}. We have for the vacuum that
\begin{align}
G^+_{\mathbb{H}}(\x, \x') = \frac{1}{4 (2\pi)^2 \rho^2} \frac{1}{\sinh \left( \Delta^+(\x, \x') \right) \sinh \left( \Delta^-(\x, \x') \right) }
\label{G+Hyperbolic}
\end{align}
where $\Delta^\pm(\x, \x') := \rho^{-1}(\pm(t-t')/2 + \mu(\underline{\x}, \underline{\x}')/2)$, and $\mu$ is the geodesic distance in hyperbolic space, so that $\sigma_h = \mu^2/2$. The singular structure of $G^+$ \eqref{G+Hyperbolic} is
\begin{align}
H(\x, \x') = \frac{1}{4 (2  \pi)^2 \rho^2 \Delta^+(\x, \x') \Delta^-(\x, \x')} = \frac{1}{2(2 \pi)^2 \sigma(\x, \x')}.
\end{align}

All that we need to do in order to obtain semiclassical gravity solutions is to extract the datum
\begin{align}
G^+_{\mathbb{H}\varphi \varphi} = G^+_{\mathbb{H}}(\x, \x')|_\mathcal{C}
\end{align}
and verify whether the $\mathcal{C}_{ij}$ \eqref{Constraint3-DF} and $\mathcal{C}_{tt}$ \eqref{ConstraintTrAnom} constraints can be imposed -- which, as we shall see, they can. Using the fact that $\Delta^+|_\mathcal{C} = \Delta^-|_\mathcal{C} = \mu/2 \rho=:\Delta$, it follows that
\begin{align}
G^+_{\mathbb{H}\varphi \varphi}-H|_\mathcal{C} = \frac{1}{4(2 \pi)^2 \rho^2 \Delta^2} \left(- \frac{\Delta^2}{3} + \frac{\Delta^4}{15} + O(\Delta^6) \right) = \frac{1}{12(2 \pi)^2 \rho^2} \left(-1 + \frac{\sigma_h}{10 \rho^2} + O(\sigma_h^2) \right), 
\end{align}
and we have that
\begin{subequations}
\label{w-HyperbolicSpace}
\begin{align}
[w] & = -\frac{1}{12(2 \pi)^2 \rho^2}, \\
[w_{;ij}] & = \frac{1}{120(2 \pi)^2 \rho^4} h_{ij},
\end{align}
\end{subequations}
which allows us to analyse the $\mathcal{C}_{ij}$ \eqref{Constraint3-DF} and $\mathcal{C}_{tt}$ \eqref{ConstraintTrAnom} constraints. The $\mathcal{C}_{tt}$ constraint fixes the cosmological constant as
\begin{align}
\frac{4 \pi}{ G_{\rm N}}  \Lambda    = \frac{m^2}{12(2 \pi)^2 \rho^2} + 2 [v_1] + \frac{\pi}{ G_{\rm N}}  \R = \frac{m^2}{12(2 \pi)^2 \rho^2} + 2 [v_1] - \frac{6 \pi}{ G_{\rm N} \rho^2},
\end{align}
where $[v_1]$ takes the constant value in terms of $m^2$, $\xi$ and $\rho$,
\begin{align}
 [v_1] & =   \frac{1}{8} m^4 - \frac{3}{2} \left(\xi - \frac{1}{6}\right) \frac{m^2}{\rho^2}   + \frac{9}{2} \left(\xi - \frac{1}{6} \right)^2 \frac{1}{\rho^4}.
\end{align}

The $\mathcal{C}_{ij}$ constraint reads
\begin{align}
 -[w_{;ij}] - \frac{2\xi}{\rho^2} [w] h_{ij}   = [v_1]  h_{ij} + \frac{\pi}{ G_{\rm N}} \left[ \frac{1}{\rho^2}  + \Lambda + \frac{2 \alpha }{\rho^4} + \frac{6 \beta }{\rho^4} \right] h_{ij}.
\label{wijHyperbolicSpace}
\end{align}

Inserting eq. \eqref{w-HyperbolicSpace} into \eqref{wijHyperbolicSpace} we obtain a relation for the combination $\alpha + 3 \beta$ of the renormalisation ambiguities, given by
\begin{align}
\frac{2\pi(\alpha + 3 \beta)}{ G_{\rm N} \rho^4} =  -\frac{1}{120(2 \pi)^2 \rho^4} +  \frac{\xi}{6(2 \pi)^2 \rho^4}  -[v_1] - \frac{\pi}{ G_{\rm N}} \left( \frac{1}{\rho^2}  + \Lambda  \right) .
\end{align}

Thus, we can see that there is a large number of semiclassical gravity solutions with hyperbolic spatial section, by appropriately choosing the renormalisation ambiguities.




\subsection{Ricci-flat, static spacetimes with conformally-coupled fields}
\label{RicciFlat-static}

Ricci-flat spacetimes are particularly interesting examples, being vacuum solutions in General Relativity, which necessarily must have a vanishing stress-energy tensor. The situation is different in semiclassical gravity due to the presence of the Riemann tensor in the $[v_1]$ term, contributing geometrically even if the Wightman function of the quantum matter fields yields non-trivial contributions to the stress-energy tensor. Here, we present a no-go result for conformally-coupled fields, which states that in this case there are no semiclassical gravity solutions with non-trivial curvature.


\begin{prop}
\label{Prop-no-go}
Let $(M, g_{ab})$ be a static, Ricci-flat, globally hyperbolic spacetime. If $(M, g_{ab})$ is not Minkowski spacetime, there exist no states for the conformally coupled Klein-Gordon field ($m^2 = 0$, $\xi = 1/6$) yielding semiclassical gravity solutions for any value of the renormalisation ambiguity coefficients $\alpha$ and $\beta$.
\end{prop}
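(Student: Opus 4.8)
The plan is to exploit the fact that, for a conformally coupled field ($m^2=0$, $\xi=1/6$) on a Ricci-flat background, the trace constraint degenerates into a \emph{state-independent}, purely geometric condition, so that it alone obstructs non-trivial solutions irrespective of the chosen Hadamard state. First I would specialise the static constraints \eqref{ConstrStat} to this case. Since $\tilde R_{ab}=0$ we have $\tilde R=0$, and every term of $I_{ab}$ and $J_{ab}$ in \eqref{Theta} carries at least one factor of $\tilde R_{ab}$, $\tilde R$ or a derivative thereof; hence $\tilde I_{ab}=\tilde J_{ab}=0$ identically. This is what lets the conclusion hold \emph{for any} $\alpha,\beta$. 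In the trace constraint \eqref{Ctt-Stat-Trace} the coefficient $\left(\tfrac12-3\xi\right)$ vanishes for $\xi=1/6$ and the $m^2$-term vanishes, so both state-dependent contributions disappear and only $[\tilde v_1]$ survives on the matter side. Using \eqref{TraceAnom} with $m^2=0$, $\xi=1/6$, $\tilde R_{ab}=0$ one gets $[\tilde v_1]=\tfrac{1}{720}\tilde R_{abcd}\tilde R^{abcd}$, so \eqref{Ctt-Stat-Trace} collapses (after dividing by $\Theta^2>0$) to the single algebraic relation
\[
\tilde R_{abcd}\tilde R^{abcd}=\frac{1440\,\pi\,\Lambda}{G_{\rm N}} ,
\]
i.e.\ the Kretschmann scalar must equal a fixed constant set by $\Lambda$ and $G_{\rm N}$. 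Non-negativity of the left-hand side already forces $\Lambda\ge 0$ (if $\Lambda<0$ there is no solution), and $\Lambda=0$ forces the Riemann tensor to vanish.

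The heart of the argument is then a geometric rigidity statement: a static, globally hyperbolic, Ricci-flat spacetime with \emph{constant} Kretschmann scalar and lapse bounded away from $0$ and $\infty$ is flat. I would establish this through the optical description $\tilde g=\Theta^2 g$, $g=-\dd t\otimes \dd t + h$. Imposing $\tilde R_{ab}=0$ in \eqref{ConformalCurvature}, and using that $\theta=\ln\Theta$ is time independent while the Ricci tensor of $g$ has only spatial components $\mathcal{R}_{ij}$, the $tt$-component reduces to $\Delta_h\theta + 4\, h^{ij}(\nabla_i\theta)(\nabla_j\theta)=0$ and the $ij$-component to an equation expressing $\mathcal{R}_{ij}$ in terms of the Hessian and gradient of $\theta$. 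The first equation says $\theta$ is superharmonic on the complete manifold $(S,h)$ and, by hypothesis, bounded. On a compact $S$ an integration by parts against $1$ gives $\int_S 4|\nabla_h\theta|^2\,\dd\vol_h=0$, hence $\nabla_h\theta=0$ immediately; on a non-compact $S$ I would run a Liouville/Caccioppoli argument, using that constancy of $\tilde R_{abcd}\tilde R^{abcd}$ together with the reduced equations bounds $\mathcal{R}_{ij}$ (bounded geometry) and that $0<c_1\le\Theta\le c_2$ prevents the gradient from escaping to infinity, to conclude $\theta$ is constant. Constancy of $\theta$ renders the spacetime ultrastatic; the $ij$-equation then gives $\mathcal{R}_{ij}=0$, and in three dimensions a Ricci-flat metric is flat, so $(S,h)$ and hence $(M,\tilde g)$ is flat — Minkowski. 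Flatness returns $\tilde R_{abcd}\tilde R^{abcd}=0$, which through the displayed relation forces $\Lambda=0$, consistently.

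The main obstacle is precisely this rigidity step in the non-compact case: the equation for $\theta$ is a Liouville-type problem, and complete manifolds generically admit non-constant bounded (super)harmonic functions, so one genuinely needs the two extra inputs, namely the curvature bounds coming from the constant-Kretschmann constraint and the two-sided bound on the lapse $\Theta$. It is worth emphasising that the hypothesis $\Theta\ge c_1>0$ is doing essential work: it excludes Killing horizons — in particular the Schwarzschild exterior, whose lapse degenerates and whose Kretschmann scalar $\propto M^2/r^6$ is manifestly non-constant — which is exactly the family one must rule out. The compact case, by contrast, is immediate and does not even require the Kretschmann constraint, since a compact static vacuum spacetime is already flat. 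Throughout, the only place the quantum state could have entered is the traceless part of $\tilde{\mathcal{C}}_{ij}$, but that part is state-dependent and hence cannot produce a geometric obstruction; the no-go therefore rests entirely on the state-independent trace relation derived above.
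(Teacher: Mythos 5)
Your reduction to the trace constraint is exactly the paper's route: you correctly observe that for $m^2=0$, $\xi=1/6$ the state-dependent terms in \eqref{Ctt-Stat-Trace} drop out (the coefficient $\tfrac12-3\xi$ and the mass term both vanish), that Ricci-flatness kills every term of $\tilde I_{ab}$ and $\tilde J_{ab}$ so the conclusion is insensitive to $\alpha$ and $\beta$, and that what survives is the purely geometric relation $\tfrac{1}{720}\tilde R_{abcd}\tilde R^{abcd}=[\tilde v_1]=2\pi\Lambda/G_{\rm N}$, i.e.\ constancy of the Kretschmann scalar. Up to this point your argument coincides with the paper's, which records precisely this identity as eq.~\eqref{MaxSymmetric}.

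The gap is in the final step. The paper concludes in one line that \eqref{MaxSymmetric} forces $g_{ab}$ to be maximally symmetric, hence (being Ricci-flat) Minkowski with $\Lambda=0$. You replace this with a rigidity claim --- static, Ricci-flat, globally hyperbolic, constant Kretschmann scalar and lapse bounded between $c_1$ and $c_2$ implies flat --- but you prove it only when $S$ is compact, where integrating $\Delta_h\theta+4\,h^{ij}(\nabla_i\theta)(\nabla_j\theta)=0$ over $S$ indeed forces $\theta$ to be constant. For non-compact $S$ you merely announce a ``Liouville/Caccioppoli argument'' and yourself flag it as the main obstacle; bounded superharmonic functions on complete manifolds are generically non-constant, so the extra inputs you invoke (curvature bounds coming from the constant-Kretschmann relation, the two-sided bound on $\Theta$) would have to be turned into an actual estimate, and you do not carry this out. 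As written, then, your proof is incomplete precisely where the geometric content lies. It is fair to note that the paper's own inference from constant Kretschmann scalar to maximal symmetry is asserted rather than argued --- a locally flat spacetime with toroidal spatial section also has constant (indeed vanishing) Kretschmann scalar and is Ricci-flat without being Minkowski, and is treated separately in sec.~\ref{subsec:Torus} --- so your instinct to supply a genuine rigidity argument is sound, but the argument has to be finished before the proposition is established.
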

\begin{proof}
Imposing constraint \eqref{Ctt-Stat-Trace} yields
\begin{align}
\frac{1}{720} R_{\mu \nu \rho \sigma} R^{\mu \nu \rho \sigma} = [v_1] = \frac{2 \pi \Lambda }{G_{\rm N}},
\label{MaxSymmetric}
\end{align}
which implies that $g_{ab}$ must be maximally symmetric. The only Ricci-flat, maximally symmetric spacetime is Minkowski spacetime with $\Lambda = 0$.
\end{proof}

The above proposition implies, for example, that there exist no semiclassical solutions for the Schwarzschild black hole with conformally coupled fields, for the semiclassical gravity equations cannot hold in the exterior region.

\subsection{Static vacuum in the static patch of de Sitter spacetime}
\label{AlphaVacua}

Let $(M, g_{ab})$ be the static patch of de Sitter spacetime with
\begin{align}
g_{ab} = -(1-H^2 r^2) \dd t_a \otimes \dd t_b + (1- H^2 r^2)^{-1} \dd r_a \otimes \dd r_b + r^2 \dd \mathbb{S}^2_{ab}, \quad H> 0.
\end{align}

The Riemann tensor, Ricci tensor and Ricci scalar are respectively
\begin{align}
R_{abcd} & = H^2 (g_{ac} g_{bc} - g_{ad} g_{bc}), \\ 
R_{ab} & = 3 H^2 g_{ab}, \\
R & = 12 H^2.
\end{align}

For fields with $m^2 + \xi R \leq 9 H^2/4$, de Sitter invariant vacuum states are defined by the Wightman function \cite{Schomblond:1976xc, Allen:1985ux}
\begin{align}
G^+_{\rm dS}(\x, \x') = \frac{8 \pi }{(m^2 + \xi R) - 2 H^2} \sec \left( \pi (9/4 - (m^2 + \xi R)/ H^2)^{1/2} \right) {}_2F_1(c, 3-c;2;(1+Z(\x, \x'))/2),
\end{align}
where ${}_2F_1$ is a hypergeometric function, the constant $c$ is a root of $c^2 -3 c + (m^2 + \xi R)/H^2$ 
and where $Z(\x, \x') := \cos^2\left(2 H^2  \sigma_\epsilon(\x, \x') \right)$.

The expectation value of the renormalised stress-energy tensor is
\begin{align}
\omega_{\rm dS}(T_{ab}) & = \frac{96 \pi \xi}{(m^2/H^2 + 12 \xi )  - 2} \sec \left( \pi (9/4 - m^2/ H^2 + 12 \xi )^{1/2} \right) \left(\frac{2}{3 \Gamma (3-c) \Gamma (c)}+\frac{\psi(3-c)+\psi (c)+2 \gamma -1}{\Gamma (2-c) \Gamma (c-1)} \right)g_{ab} \nonumber \\
& - \frac{1}{2 (2 \pi)^2} \left( \frac{1}{8} m^4 + 3 \left(\xi - \frac{1}{6}\right)H^2 m^2   + \left[18 \left(\xi - \frac{1}{6} \right)^2 -\frac{1}{60} \right]H^4 \right) g_{ab},
\label{DeSitterStressEnergy}
\end{align}
where $\psi$ is the digamma function. Solutions to the semiclassical gravity equation are the roots of
\begin{align}
- 3 H^2 g_{ab} + \Lambda g_{ab} = 8 \pi G_N \omega_\alpha(T_{ab})
\end{align}
(setting $\omega_\alpha(T_{ab})$ as in eq. \eqref{DeSitterStressEnergy}) in the four-dimensional parameter space defined by the parameters $m^2, \xi, H, \Lambda$.

\subsection{Spacetimes with toroidal spatial section}
\label{subsec:Torus}

As mentioned above a Klein-Gordon field in the Minkowski vacuum in Minkowski spacetime yields a solution to semiclassical gravity if the renormalisation ambiguities are appropriately fixed. In this section we study the analogous problem for a locally flat spacetime with toroidal spatial section.

Let us consider spacetimes of the form $(\mathbb{R} \times \mathbb{T}^3, g_{ab})$, where $\mathbb{T}^3 = \mathbb{S}^1 \times \mathbb{S}^1 \times \mathbb{S}^1$ and $g_{ab} = - \dd t_a \otimes \dd t_b + \delta_{ij} \dd x^i_a \otimes \dd x^j_a$, with $t \in \mathbb{R}$ and each of the $x^i$ spatial coordinates ranging over $[0, 2 \pi L)$, i.e. the $\mathbb{S}^1$ have radius $L$. 

For time-periodic solutions the Klein-Gordon equation takes the form of an abstract wave equation, which reduces to the eigenvalue problem
\begin{equation}
A \psi_{n_i} = \omega_{n_i}^2 \psi_{n_i},
\end{equation}
where $A := - \triangle + m^2$ is positive and elliptic and $n_i \in \mathbb{Z}^3$. The eigenfunctions are of the form
\begin{align}
\psi_{n_i} (x) = \frac{1}{(2 \pi L)^{3/2} } \ee^{\ii L^{-1} n_i x^i }
\end{align}
and the eigenvalues are $\omega_{n_i}^2 = n^2/L^2 + m^2$. The initial data for the vacuum state is therefore \eqref{VaccumTwoPtSpectral}
\begin{align}
{}^\Omega G^+_{\varphi \varphi}(x, x') & = \lim_{\epsilon \to 0^+}\sum_{n_i \in \mathbb{Z}^3} \frac{\ee^{- \omega_{n_i} \epsilon}}{2 \omega_{n_i} } \psi_{n_i} (x) \overline{\psi_{n_i} (x')} \nonumber \\
& = \sum_{n_i \in \mathbb{Z}^3} \frac{m}{4 \pi^2} \frac{K_1 \left(m \left[(x^i-(x')^i+2\pi L n^i)(x_i-(x')_i+2\pi L n_i)+ \epsilon^2 \right]^{1/2} \right)}{\left[(x^i-(x')^i+2\pi L n^i)(x_i-(x')_i+2\pi L n_i) + \epsilon^2 \right]^{1/2} },
\label{ImagesTorus}
\end{align}
together with ${}^\Omega G^+_{\pi \pi} = A {}^\Omega G^+_{\varphi \varphi}$ and ${}^\Omega G^+_{\varphi \pi} = (\ii/2) 1\!\!1 = -{}^\Omega G^+_{\pi \varphi}$. The expression on the right-hand side of eq. \eqref{ImagesTorus} is particularly useful because the singular structure is contained in the $n_1 = n_2 = n_3 = 0$ term of the sum, while the remaining terms in the sum converge exponentially fast in the range of $x^i$ and $(x')^i$, see \cite{Hollands:2014eia}. 

We are interested in obtaining the quantities $[w_\ell]$ and $[w_{\ell ;ij}]$ in order to analyse the constraints \eqref{Constraint3-DF} and \eqref{ConstraintTrAnom}, which read
\begin{subequations}
\begin{align}
\mathcal{C}_{ij} & = \frac{1}{2(2\pi)^2} \left[ -[w_{\ell;ij}]  - h_{ij} \frac{m^4}{8} \right] - \frac{\Lambda}{8 \pi G_{\rm N}} h_{ij} = 0, \label{CijTorus}\\
\mathcal{C}_{tt} & = \frac{1}{2(2 \pi)^2} \left[ m^2 [w_\ell] -  \frac{m^4}{4}  \right] + \frac{\Lambda}{2 \pi G_{\rm N}}  = 0. \label{CttTorus}
\end{align}
\end{subequations}

For concreteness, we choose the scale in the Hadamard subtraction piece as $\ell^2 = 2/m^2$. We have that
\begin{align}
[w_\ell]& = \frac{m^2}{16 \pi^2} \left( 2 \gamma - 1 \right) + \sum_{n_k \in \mathbb{Z}^3 \backslash \underline{0}} \frac{m}{8 \pi^3 L n} K_1 \left(2 \pi L m n \right), \\
 [w_{\ell;ij}] & = \frac{ m \delta_{ij} }{(16 \pi)^2 (\pi L n)^5} \sum_{n_k \in \mathbb{Z}^3\backslash \underline{0}} \left[ K_1(2 \pi L m n) \left( (4 + 2 \pi^2 L^2 m^2 n^2) (2 \pi L)^2 n_i n_j - 2(4 \pi^2 L^2 n^2)  \right) \right. \nonumber \\
 & \left. - 2\pi L m n  K_0 (2 \pi L m n) \left( -2 (2 \pi L)^2 n_i n_j + 2 \pi^2 L^2 n^2 \right) \right]
\label{klTorus}
\end{align}
where $n = (n_i n^i)^{1/2}$. The term $[w_{\ell;kl}]$ in eq. \eqref{klTorus} has the form $[w_{\ell;kl}] = \breve w h_{kl}$, and hence constraint \eqref{CijTorus} can be written as a single equation,
\begin{align}
\frac{1}{2(2\pi)^2} \left[ -\breve w  - \frac{m^4}{8} \right] - \frac{\Lambda}{8 \pi G_{\rm N}}  = 0.
\label{CijDiagTorus}
\end{align}

Using constraints \eqref{CttTorus} and \eqref{CijDiagTorus} we have two equations and we should in principle be able to fix the ambiguities $\alpha_1$ and $\alpha_2$ in \eqref{Theta}, which appear implicitly in $\Lambda$ and $G_{\rm N}$ through \eqref{RenConstants}, to solve for these constraints. However, since $G_{ab} = 0$ the ambiguity $\alpha_2$ becomes redundant, in that only the combination $\Lambda/G_{\rm N}$ can be fixed with the ambiguities. In the case of the Minkowski vacuum in Minkowski spacetime, this is sufficient, since the vacuum state shares the symmetries of spacetime and the stress-energy tensor is proportional to the metric. However, in the case of a flat spacetime with toroidal spatial section the vacuum state is no longer Poincar\'e invariant, and as a result the $\mathcal{C}_{tt}$ and $\mathcal{C}_{ij}$ constraints are not proportional to each other.

 The $\mathcal{C}_{tt}$ constraint \eqref{CttTorus} imposes that
\begin{align}
 \frac{\Lambda}{8 \pi G_{\rm N}}  = \frac{m^2}{8(2 \pi)^2} \left( \frac{m^2}{4} - \frac{m^2}{16 \pi^2} \left( 2 \gamma - 1 \right) - \sum_{n_i \in \mathbb{Z}^3 \backslash \underline{0}} \frac{m}{8 \pi^3 L n} K_1 \left(2 \pi L m n \right)   \right),
 \label{CttTorusSolve}
\end{align}
and inserting eq. \eqref{klTorus}  and \eqref{CttTorusSolve} into \eqref{CijTorus} (or \eqref{CijDiagTorus}) we have
%
\begin{align}
 & \sum_{n_i \in \mathbb{Z}^3 \backslash \underline{0}} \left\{ \frac{3 m^3}{32 \pi^3 L n} K_1 \left(2 \pi L m n \right) -   \frac{ m  }{(16 \pi)^2 (\pi L n)^5}   \left[   (-2 + 2 \pi^2 L^2 m^2 n^2) (2 \pi L)^2 n^2 K_1(2 \pi L m n)    \right. \right. \nonumber \\
 & \left.  + 4\pi^3 L^3 m n^3  K_0 (2 \pi L m n)  \right] \Big\} =   \frac{9 m^4}{16}  - \frac{3 m^4}{64 \pi^2} \left( 2 \gamma - 1 \right).
\label{ConstraintTorus}
\end{align}

Since the left-hand side of eq. \eqref{ConstraintTorus} is $L$-dependent and the right-hand side is $L$-independent, at a fixed renormalisation scale semiclassical gravity on the flat torus admits \emph{at most} $L$-fine-tuned solutions and generically does not admit solutions. Let us emphasise that in eq. \eqref{ConstraintTorus} there are no renormalisation ambiguities that can be fixed in order to obtain solutions for an arbitrary value of $L$. 

We should make one further remark here comparing the situation with Minkowski spacetime. As mentioned below eq. \eqref{CttTorus}, we have been working here at a fixed renormalisation scale, $\ell^2 = 2/m^2$. For the Minkowski vacuum, at any given renormalisation scale $\ell$ it is possible to find a solution to semiclassical gravity, and in the case of a locally flat spacetime with toroidal spatial section we have seen that at a fixed scale one cannot find solutions for arbitary $L$. However, for any given $L$ it is possible to find a scale at which the semiclassical gravity equations hold. Changing the scale modifies the constraint $\mathcal{C}_{ij} = 0$ \eqref{CijTorus} to $\mathcal{C}_{ij} = s m^4  h_{ij}$, where $s \in \mathbb{R}$. Accordingly, the constraint \eqref{CttTorus} changes to $\mathcal{C}_{tt} = -4 s m^4  h_{ij}$. Thus, one has a system of two constraints with two variables, the free ambiguities $\Lambda/G_{\rm N}$ and $s$, that can be solved and yield a solution to semiclassical gravity. 




\section{Summary and final remarks}
\label{sec:Conc}

We have studied semiclassical gravity in ultrastatic and static globally hyperbolic spacetimes. In the latter case, we have taken advantage of conformal techniques in order to detail the Hadamard subtraction step in the renormalisation of the stress-energy tensor, while in the former case the Hadamard bi-distribution is known explicitly due to the symmetry of spacetime.

Due to the presence of a global timelike Killing vector, the differential equations govening the evolution of geometry become merely constraints on the initial data of the Klein-Gordon field -- or more precisely on the regular part of the Wightman two-point function. Under a ``positive energy" assumption for the Klein-Gordon field, one can obtain a closed-form expression for the Wightman function in spacetime (i.e., as prescribing the correlations between fields at two non-null connected spacetime points) in terms of the Wightman function Cauchy data (which prescribes the correlations between fields and its momenta on a Cauchy surface). Moreover, using this closed-form expression, one can characterise initial data conditions that guarantee the stationarity of the state, and hence the preservation of the initial constraints at all times if they hold on the initial value surface.

The main results of the previous sections can be summarised as follows:

\begin{thm}
\label{ThmMain:Ultrastat}
Let $(M = \mathbb{R} \times S,g_{ab})$ be a globally hyperbolic, ultrastatic spacetime with Cauchy surface $S$, and with an irrotational, timelike Killing vector $\eta^a = \partial_t^a$  and metric $g_{ab} = - \dd t_a \otimes \dd t_b + h_{ij} \dd x^i_a \otimes \dd x^j_b$ in local coordinates $p^\mu = (t, x^i)$. The semiclassical Einstein field equations with a free Klein-Gordon field \eqref{semiEFE2} (or \eqref{semiEFE2-alt}) and \eqref{KG2} take the form
\begin{align}
(\partial_t^2 + A_h) G^+(p, p') = (\partial_{t'}^2 + A_{h'}) G^+(p, p') = 0, \quad \text{with} \quad A_h := -\Delta_h + m^2 + \xi \R
\label{MainThmUltraStat-KG}
\end{align}
where $\Delta_h$ is the Laplacian operator in the Riemannian spacetime $(S, h_{ab})$ with Riemann curvature $\mathcal{R}_{abcd}$, subject to constraints \eqref{semiEFE-constraints1} and \eqref{semiEFE-constraints3} preserved in time. The Hadamard condition on $G^+(p,p')$ is \eqref{HadamardFormUltrastatic}. Assume the ``positive energy" condition that $A_h$ is positive in $L^2(S, \dd \vol_h)$, then $A = \overline{A_h}$ is self-adjoint in $L^2(S, \dd \vol_h)$. 

Given initial data \eqref{WightmanIVP} at $t = 0$,
\begin{subequations}
\begin{align}
G^+((0,x),(0,x')) = \omega(\varphi(x) \varphi(x')) = G^+_{\varphi\varphi}(x,x'), & \quad & \partial_t G^+((0,x),(0,x')) = \omega(\pi(x) \varphi(x')) = G^+_{\pi \varphi}(x,x'), \\
\partial_{t'}G^+((0,x),(0,x')) = \omega(\varphi(x) \pi(x')) = G^+_{\varphi \pi}(x,x'), & \quad & \partial_t \partial_{t'} G^+((0,x),(0,x')) = \omega(\pi(x) \pi(x')) = G^+_{\pi \pi}(x, x'),
\end{align}
\end{subequations}
that satisfies the Hadamard condition \eqref{HadamardFormUltrastatic} on $S$, and as distributional kernels that
\begin{align}
G^+_{\pi \pi} = A G^+_{\varphi \varphi} \quad \text{ and } \quad G_{\varphi \pi}^+ = \frac{\ii}{2} 1\!\!1 = -G_{\pi \varphi}^+,
\end{align}
the solution to \eqref{MainThmUltraStat-KG} is as the integral kernel given of eq. \eqref{UltrastatInvWightman},
\begin{align}
G^+(t,t') & = G^+_{\varphi \varphi} \cos(A^{1/2} (t-t'))  -  \ii \frac{\sin (A^{1/2} (t-t')) }{2A^{1/2}}.
\label{ThmUltraStatInvWightman}
\end{align}

Eq. \eqref{ThmUltraStatInvWightman} defines time-translation invariant states. If the constraints \eqref{semiEFE-constraints1} and \eqref{semiEFE-constraints3} hold on $S$, then they hold everywhere in spacetime and eq. \eqref{ThmUltraStatInvWightman} is a solution to semiclassical gravity in ultrastatic spacetimes. Furthermore, using Synge's brackets notation for coincidence limits, constraints \eqref{semiEFE-constraints1} and \eqref{semiEFE-constraints3} take the form of eq. \eqref{Constraint3-DF} and \eqref{ConstraintTrAnom}.
\end{thm}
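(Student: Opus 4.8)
The plan is to assemble the theorem from the structural results established earlier, since it is really a synthesis of the analysis in Sections \ref{sec:Ultrastat} and \ref{sec:States}. First I would substitute the ultrastatic ansatz \eqref{UltraStat-metric} into the system \eqref{semiEFEKG2}. Because the extrinsic curvature of $S$ and all its normal derivatives vanish, and because the Riemann tensor has the block structure \eqref{UltraStat-Riemann}---so that every curvature quantity built from $g_{ab}$ coincides with a $t$-independent quantity built from $h_{ab}$---the geometric left-hand sides of the metric equations are manifestly time-independent. Hence the evolution equations for $g_{ab}$ degenerate into the constraints \eqref{semiEFE3-1}, \eqref{semiEFE3-2} and \eqref{semiEFE3-3}, while the matter equation reduces to the abstract wave equation \eqref{MainThmUltraStat-KG}. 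This is exactly the reduction displayed in \eqref{semiEFEKG3}, and it establishes the asserted form of the equations together with the Hadamard condition \eqref{HadamardFormUltrastatic}.

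Next I would invoke the positive-energy hypothesis. By Theorem \ref{Thm:Kay}, positivity of $A_h$ on $C_0^\infty(S)$ guarantees that $A = \overline{A_h}$ is positive, self-adjoint and invertible, so its functional calculus---in particular $A^{1/2}$, $\cos(A^{1/2}t)$ and $\sin(A^{1/2}t)/A^{1/2}$---is well defined. Feeding the causal propagator \eqref{CausalPropagator} of Theorem \ref{Thm:CausalPropagator} into the Cauchy-data representation of the Wightman function yields the general expression \eqref{UltrastaticKGStates} of Theorem \ref{Thm:InvariantWightman}, which is the unique bi-solution of \eqref{MainThmUltraStat-KG} determined by the prescribed data. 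The crucial observation is that the hypotheses $G^+_{\pi\pi} = A G^+_{\varphi\varphi}$ and $G^+_{\varphi\pi} = \tfrac{\ii}{2}\1 = -G^+_{\pi\varphi}$ are precisely the conditions \eqref{StatInitialData} characterising time-translation invariance; substituting them collapses the general two-point function into the form \eqref{UltrastatInvWightman}, i.e.\ \eqref{ThmUltraStatInvWightman}. Thus the constructed state depends on $t$ and $t'$ only through $t-t'$ and is stationary.

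Finally, I would argue constraint preservation. Since the two-point function depends only on $t-t'$, every coincidence limit $\lim_{p'\to p}$ appearing in the point-split stress-energy tensor is $t$-independent, so conditions \eqref{Tindep} hold automatically; the Hadamard property \eqref{HadamardFormUltrastatic} is needed only to ensure that these coincidence limits---and hence $\omega(T_{ab})$---exist. Consequently, if \eqref{semiEFE-constraints1} and \eqref{semiEFE-constraints3} hold at $t=0$ they hold for all $t$. The reduction of these constraints to the compact forms \eqref{Constraint3-DF} and \eqref{ConstraintTrAnom} then follows by discarding $\mathcal{C}_{ti}$ via Lemma \ref{LemmaCti} and applying the Hadamard-coefficient identities of \cite{Decanini:2005eg} (their eq.\ (71) and (52)) to trade derivatives of $H_\ell$ for the diagonals $[w_\ell]$, $[w_{\ell;ij}]$, $[w_{\ell,tt}]$ and $[v_1]$. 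I expect the main obstacle to be not any single step but the bookkeeping in this last reduction: one must verify that the $v$-coefficient and the $H_\ell$-derivative contributions combine exactly as claimed, which relies on the time-independence of the $v_n$ established around \eqref{vnExpansions2} and on the antisymmetric-part computation carried out in the proof of Lemma \ref{LemmaCti}. Everything else is a direct citation of the preceding theorems.
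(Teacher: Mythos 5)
Your proposal is correct and follows essentially the same route as the paper, which presents Theorem \ref{ThmMain:Ultrastat} as a synthesis of the reduction in eq. \eqref{semiEFEKG3}, the Hadamard analysis of sec. \ref{subsec:Had-Sing}, Theorems \ref{Thm:Kay}, \ref{Thm:CausalPropagator} and \ref{Thm:InvariantWightman}, and the constraint discussion of sec. \ref{subsec:Constraints}; you cite and order these ingredients exactly as the paper does, including the key point that time-translation invariance of both $G^+$ and $H_\ell$ (via the $t$-independence of the $v_n$) makes the coincidence limits $t$-independent and hence preserves the constraints.
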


\begin{thm}
\label{ThmMain:Stat}
Let $(M, \tilde{g}_{ab})$ be a globally hyperbolic, static spacetime with metric $\tilde{g}_{ab} = \Theta^2 g_{ab} = \ee^{2 \theta} g_{ab}$, where $\Theta > 0$ is a smooth, time-independent, conformal factor and $g_{ab}$ is as in theorem \ref{ThmMain:Ultrastat}. The semiclassical Einstein field equations with a free Klein-Gordon field \eqref{semiEFE2} (or \eqref{semiEFE2-alt}) and \eqref{KG2} take the form
\begin{align}
(\partial_t^2 + \tilde{A}_h)\Theta \tilde{G}^+(p, p') = (\partial_{t'}^2 + \tilde{A}_{h'}) \Theta' \tilde{G}^+(p, p') = 0,
\label{MainThmStat-KG}
\end{align}
with $\tilde{A}_{h}$ an $L^2(S, \dd \vol_h)$ operator defined by eq. \eqref{Atilde}, subject to constraints \eqref{Ctt-Stat-Trace} (alternatively \eqref{CttStat}) and \eqref{CijStat} preserved in time. The Hadamard condition on $\tilde{G}^+(p,p')$ is \eqref{StaticHadamard}. Assume the ``positive energy" condition that $\tilde{A}_h$ is positive in $L^2(S, \dd \vol_h)$, then $\tilde{A} = \overline{\tilde{A}_h}$ is self-adjoint in $L^2(S, \dd \vol_h)$. Given initial data at $t = 0$
\begin{subequations}
\begin{align}
 \tilde{G}^+((0,x),(0,x')) = \Theta^{-1}(x) G^+_{\varphi \varphi}(x,x') \Theta^{-1}(x'), & \quad & \partial_t \tilde{G}^+((0,x),(0,x')) = \Theta^{-1}(x) G^+_{\pi \varphi}(x,x') \Theta^{-1}(x'), \\
\partial_{t'}\tilde{G}^+((0,x),(0,x')) = \Theta^{-1}(x) G^+_{\varphi \pi}(x,x') \Theta^{-1}(x'), & \quad & \partial_t \partial_{t'} \tilde{G}^+((0,x),(0,x'))  = \Theta^{-1}(x) G^+_{\pi \pi}(x, x') \Theta^{-1}(x'),
\end{align}
\end{subequations}
that satisfies the Hadamard condition \eqref{StaticHadamard} at $t = 0$, and as distributional kernels that
\begin{align}
G^+_{\pi \pi} = \tilde A G^+_{\varphi \varphi} \quad \text{ and } \quad G_{\varphi \pi}^+ = \frac{\ii}{2} 1\!\!1 = -G_{\pi \varphi}^+,
\end{align}
then
\begin{align}
\tilde G^+(t, t') & =  \Theta^{-1} \cos(\tilde A^{1/2}(t-t'))  G^+_{\varphi \varphi} (\Theta')^{-1} - \ii \Theta^{-1} \frac{\sin(\tilde{A}^{1/2}(t-t'))}{2 \tilde{A}^{1/2}} (\Theta')^{-1}
\label{MainThmSolStat}
\end{align}
is the integral kernel form of the bi-solution to eq. \eqref{MainThmStat-KG}. If the constraints \eqref{Ctt-Stat-Trace} (alternatively \eqref{CttStat}) and \eqref{CijStat} hold for the initial data, then they hold everywhere in spacetime and \eqref{MainThmSolStat} is a solution to the semiclassical Einstein field equations in static spacetimes.
\end{thm}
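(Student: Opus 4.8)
The plan is to reduce Theorem \ref{ThmMain:Stat} to its ultrastatic counterpart, Theorem \ref{ThmMain:Ultrastat}, by exploiting that the static metric $\tilde{g}_{ab} = \Theta^2 g_{ab}$ is conformal to the ultrastatic optical metric $g_{ab}$ through a time-independent conformal factor. First I would settle the operator-theoretic hypothesis: since $\tilde{A}_h$ is assumed positive in $L^2(S,\dd\vol_h)$, Kay's Theorem \ref{Thm:Kay} applied to the optical metric $g_{ab}$ (itself ultrastatic and globally hyperbolic) yields that $\tilde{A}_h$ is essentially self-adjoint on $C_0^\infty(S)$ and that $\tilde{A} = \overline{\tilde{A}_h}$ is positive and invertible. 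The substitution $\Phi = \Theta\tilde{\Phi}$ recasts the static Klein-Gordon equation \eqref{KG-Static} as the ultrastatic-type equation \eqref{KG-Conformal} for $\Phi$ with spatial operator $\tilde{A}_h$; passing to the two-point function gives \eqref{MainThmStat-KG}.

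Next I would construct the auxiliary ultrastatic state and transport it back. Applying Theorem \ref{Thm:InvariantWightman} to the operator $\tilde{A}$ with the data $G^+_{\varphi\varphi}, G^+_{\pi\varphi}, G^+_{\varphi\pi}, G^+_{\pi\pi}$ for $\Phi$ — which by hypothesis obey the stationarity relations $G^+_{\pi\pi} = \tilde{A}G^+_{\varphi\varphi}$ and $G^+_{\varphi\pi} = (\ii/2)\1 = -G^+_{\pi\varphi}$ — produces the time-translation invariant auxiliary Wightman function $G^+(t,t') = \cos(\tilde{A}^{1/2}(t-t'))G^+_{\varphi\varphi} - \ii\sin(\tilde{A}^{1/2}(t-t'))/(2\tilde{A}^{1/2})$. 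Conformally rescaling via \eqref{ConformalWightman}, $\tilde{G}^+ = \Theta^{-1}G^+(\Theta')^{-1}$, yields exactly the claimed integral kernel \eqref{MainThmSolStat}, i.e. \eqref{StaticWightman}; the Hadamard-preservation proposition of Sec.~\ref{sec:Stat} (resting on $\mathrm{WF}(G^+) = \mathrm{WF}(\tilde{G}^+)$ for smooth strictly positive $\Theta$ together with Radzikowski's theorem) then certifies that $\tilde{G}^+$ is Hadamard in $(M,\tilde{g}_{ab})$, so that its renormalised stress-energy tensor — assembled from \eqref{StaticHadamard}, \eqref{StatParallelProp} and \eqref{ConformalCurvature} — is well defined.

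It then remains to handle the constraints and their time preservation. The component $\tilde{\mathcal{C}}_{ti} = 0$ is an identity: the state-dependent piece $[\tilde{w}_{\ell;ti}]$ vanishes by the same commutator and Synge's-rule argument as in Lemma \ref{LemmaCti}, while the geometric piece vanishes by the block-diagonal structure of the static curvature invariants, as in \eqref{CtiStatVanish}. The surviving constraints reduce to \eqref{Ctt-Stat-Trace} (equivalently \eqref{CttStat}) and \eqref{CijStat}, which were already obtained by conformally transporting the Decanini-Folacci coincidence-limit identities. For preservation in time I would observe that, because $\Theta$ carries no time dependence and the auxiliary $G^+$ depends on $t$ only through $t-t'$, the rescaled $\tilde{G}^+$ is likewise a function of $t-t'$ alone; hence every coincidence-limit quantity entering the constraints, namely $[\tilde{w}_\ell]$, $[\tilde{\nabla}_i\tilde{\nabla}_j\tilde{w}_\ell]$ and $[\tilde{\nabla}_t^2\tilde{w}_\ell]$, together with the $t$-independent geometric terms of the static metric, is constant in $t$. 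Therefore $\partial_t\tilde{\mathcal{C}}_{tt} = \partial_t\tilde{\mathcal{C}}_{ij} = 0$ automatically, and validity of the constraints at $t=0$ propagates to all of spacetime.

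The main obstacle I anticipate is bookkeeping the conformal factor consistently across the three transport layers — the two-point function \eqref{ConformalWightman}, the Hadamard parametrix \eqref{StaticHadamard} and the parallel-transport propagator \eqref{StatParallelProp} — so that stationarity of the auxiliary ultrastatic field $\Phi = \Theta\tilde{\Phi}$ (with respect to $\tilde{A}$) genuinely translates into stationarity of the physical field $\tilde{\Phi}$. This works precisely because $\Theta$ is time-independent, but it is the step where a sign or factor error would most plausibly enter; I would also confirm that the positivity hypothesis \eqref{Atilde} on $\tilde{A}_h$ is exactly what simultaneously licenses Kay's theorem and the functional-calculus expression for the causal propagator underlying \eqref{MainThmSolStat}.
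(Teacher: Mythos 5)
Your proposal is correct and follows essentially the same route as the paper: the theorem is proved there by exactly this reduction to the ultrastatic case via the optical metric, i.e.\ Kay's Theorem \ref{Thm:Kay} applied to $\tilde{A}_h$, Theorem \ref{Thm:InvariantWightman} for the auxiliary time-translation invariant Wightman function, the conformal rescaling \eqref{ConformalWightman} together with the Radzikowski-based Hadamard-preservation proposition, the vanishing of $\tilde{\mathcal{C}}_{ti}$ by the Lemma \ref{LemmaCti} argument plus \eqref{CtiStatVanish}, and constraint preservation from time-translation invariance of the rescaled state with time-independent $\Theta$. No gaps.
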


Admittedly, given the form of the initial-data constraints, even the existence of the vacuum state in a given spacetime is not guaranteed. Nevertheless, we have aimed at expressing the constraints in a way that allows for an efficient check, given a state of interest, cf. eq. \eqref{Constraint3-DF} and \eqref{ConstraintTrAnom} in the ultrastatic case and eq. \eqref{CijStat} and \eqref{Ctt-Stat-Trace} in the static case. 
One of the constraint equations involves an elliptic equation that fixes the diagonal of the regular part of the two-point function, cf. eq. \eqref{ConstraintTrAnom} in the ultrastatic case and eq. \eqref{Ctt-Stat-Trace} in the static case. Once this constraint is solved, the result can be inserted into eq. \eqref{Constraint3-DF} in the ultrastatic and eq. \eqref{CijStat} in the static case and there remains a set of three constraints to be solved. 
Despite the complicated form of the constraints, it is possible to deduce that for conformally-coupled fields in Ricci-flat spacetimes semiclassical gravity only admits solutions in Minkowski spacetime, see prop. \ref{Prop-no-go}. Some examples are discussed in sec. \ref{sec:Examples}, including for ultrastatic spacetimes with hyperbolic spatial section, de Sitter spacetime and the flat spacetime with toroidal spatial section.

It seems to us that the techniques that we have studied here yield a promising avenue for checking ``good" initial data for the Wightman two-point function in full semiclassical gravity, in situations in which one has an asymptotically static region. For spacetimes that are not asymptotically static, one could perhaps still interpolate between a static spacetime and the spacetime of interest with the aid of auxiliary switching functions, then ``forget" about the static region, and in this way generate ``good" initial data for the Wightman function in the spacetime of interest, where the form of the Hadamard bi-solution is a priori not known, as argued in sec. IV B of \cite{Juarez-Aubry:2019jon} in the context of a semiclassical scalars toy model. The key point is that, indeed, if the state is Hadamard in the static region (containing a Cauchy surface), then it must be Hadamard in all of spacetime \cite{Fulling:1981cf}. These ideas, however, deserve careful further examination. 

As a final remark, we should point out that the conformal techniques that we have used for studying static spacetimes can be well adapted to studying semiclassical cosmology, even in the presence of anisotropies, i.e., with a spacetime-dependent scale factor. This is a more complicated case since now the semiclassical Einstein equations provide the evolution law for the scale factor, and are not merely constraints. Advances in this direction for semiclassical gravity in globally hyperbolic spacetimes with conformally covariant fields appears in \cite{Juarez-Aubry:2021abq}.

\section*{Acknowledgments}
This work is funded by a CONACYT Postdoctoral Research Fellowship. We also thank the support of CONACYT project no. 140630 and UNAM-DGAPA-PAPIIT grant no. IG100120. The author gladly thanks Bernard S. Kay, Tonatiuh Miramontes and Daniel Sudarsky for many stimulating conversations on semiclassical gravity. We thank two anonymous referees for comments on an earlier version of this work.

\appendix

\section{An application of formula \eqref{deltaFormula}}
\label{Appendix}

\begin{ex}
We present an application of formula \eqref{deltaFormula}. Consider a test function $f \in C_0^\infty(\mathbb{R}^3)$. We have
\begin{align}
& \lim_{\epsilon \to 0^+}- \int_{\mathbb{R}^3} \dd \vol f \frac{\epsilon}{2(2\pi)^2} \left( -\frac{|\Delta_h|^{1/2}}{(\sigma_h + \frac{1}{2} \epsilon^2)^2} + \frac{v}{\sigma_h + \frac{1}{2}\epsilon^2}   \right) \nonumber \\ 
&  = \lim_{\epsilon \to 0^+} -\frac{  \epsilon}{2(2 \pi)^2} \int_{\mathbb{R}^3} \dd r \dd \theta \dd \phi r^2 \sin \theta f(r,\theta, \phi) \left( \frac{-1}{(r^2/2 + \epsilon^2/2)^2} + \frac{v}{r^2/2 + \epsilon^2/2} \right) \nonumber \\
& =  \lim_{\epsilon \to 0^+} -\frac{ 1 }{2(2 \pi)^2} \int_{\mathbb{R}^3}  \dd x \dd \theta \dd \phi x^2 \sin \theta f(\epsilon x, \theta, \phi) \left( \frac{-4}{(x^2 + 1 )^2} + \frac{2 v \epsilon^2}{x^2 + 1 } \right),
\end{align}
where the last line is obtained by a change of variables. Using the dominated convergence theorem
\begin{align}
 \lim_{\epsilon \to 0^+}- \int_{\mathbb{R}^3} \dd \vol f \frac{\epsilon}{2(2\pi)^2} \left( -\frac{|\Delta_h|^{1/2}}{(\sigma_h + \frac{1}{2} \epsilon^2)^2} + \frac{v}{\sigma_h + \frac{1}{2}\epsilon^2}   \right)   & =  \frac{1 }{2 \pi^2} f(0)\int_{\mathbb{R}^3}  \dd x \dd \theta \dd \phi x^2 \sin \theta  \frac{1}{(x^2 + 1 )^2} \nonumber \\
& = \frac{2  }{\pi} f(0)\int_{\mathbb{R}}  \dd x    \frac{x^2}{(x^2 + 1 )^2} =  f(0),
\end{align}
as desired.
\end{ex}

\begin{rem}
Formula \eqref{deltaFormula} can be generalised to be used even if the support of $f$ is not contained inside the convex normal neighbourhood containing the points $x$ and $x'$. In this case, one can use a characteristic function, $\chi$, such that ${\rm supp} (\chi f)$ is contained in such convex normal neighbourhood and apply formula \eqref{deltaFormula}.  
\end{rem}

\end{document}